\title{Two Timescale Convergent Q-learning for Sleep--Scheduling in Wireless Sensor Networks}
\author[1]{Prashanth L A \thanks{prashanth.la@inria.fr}}
\author[2]{Abhranil Chatterjee \thanks{abhranilc@ee.iisc.ernet.in}}
\author[3]{Shalabh Bhatnagar \thanks{shalabh@csa.iisc.ernet.in}}
\affil[1]{\small INRIA Lille - Nord Europe, Team SequeL, FRANCE.}
\affil[2]{\small System Sciences and Automation,
Indian Institute of Science, Bangalore, INDIA.}
\affil[3]{\small Department of Computer Science and Automation,
Indian Institute of Science, Bangalore, INDIA}
\date{}
\begin{document}

\maketitle

\begin{abstract}
In this paper, we consider an intrusion detection application for Wireless Sensor Networks (WSNs). We study the problem of scheduling the sleep times of the individual sensors, where the objective is to maximize the network lifetime while keeping the tracking error to a minimum. We formulate this problem as a partially-observable Markov decision process (POMDP) with continuous state-action spaces, in a manner similar to \cite{fuemmeler2008smart}. However, unlike their formulation, we consider infinite horizon discounted and average cost objectives as performance criteria. 
For each criterion, we propose a convergent on-policy Q-learning algorithm that operates on two timescales, while employing function approximation. Feature-based representations and function approximation is necessary to handle the curse of dimensionality associated with the underlying POMDP. Our proposed algorithm incorporates a policy gradient update using a one-simulation simultaneous perturbation stochastic approximation (SPSA) estimate on the faster timescale, while the Q-value parameter (arising from a linear function approximation architecture for the Q-values) is updated in an on-policy temporal difference (TD) algorithm-like fashion on the slower timescale.
The feature selection scheme employed in each of our algorithms manages the energy and tracking components in a manner that assists the search for the optimal sleep-scheduling policy. 
For the sake of comparison, in both discounted and average settings, we also develop a function approximation analogue of the Q-learning algorithm. This algorithm, unlike the two-timescale variant, does not possess theoretical convergence guarantees. 
Finally, we also adapt our algorithms to include a stochastic iterative estimation scheme for the intruder's mobility model and this is useful in settings where the latter is not known. Our simulation results on a synthetic 2-dimensional network setting suggest that our algorithms result in better tracking accuracy at the cost of only a few additional sensors, in comparison to a recent prior work. 
\end{abstract}

\keywords{Sensor Networks, Sleep-Wake Scheduling, Reinforcement Learning, Q-learning, Simultaneous Perturbation, Function Approximation, SPSA.}

\section{Introduction}

Considering the potential range of applications and the low deployment and maintenance cost, a lot of research attention has gone into the design of Wireless Sensor Networks (WSNs). In this paper, we investigate the use of WSNs for an intrusion detection application. In particular, we study the problem of scheduling the sleep times of the individual sensors, where the objective is to maximize the network lifetime while keeping the tracking error to a minimum. 

\tikzstyle{block} = [draw, fill=white, rectangle,
    minimum height=3em, minimum width=6em]
\tikzstyle{sum} = [draw, fill=white, circle, node distance=1cm]
\tikzstyle{input} = [coordinate]
\tikzstyle{output} = [coordinate]
\tikzstyle{pinstyle} = [pin edge={to-,thin,black}]
\tikzset{roads/.style={dotted,line width=0.05cm}}

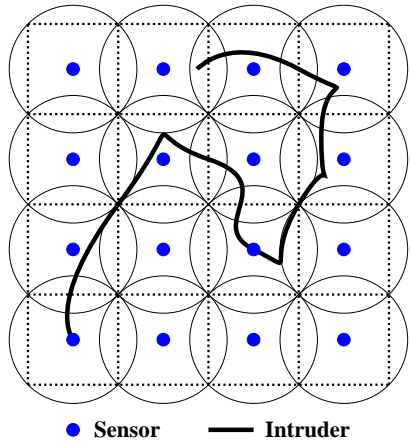
\begin{figure}[h]
\centering
\scalebox{0.6}{
  \tabl{c}{\begin{tikzpicture}
  \foreach \x in {0,2,4,6,8}
  {
  \draw[roads] (\x,0) -- (\x,8);
  \draw[roads] (0,\x) -- (8,\x);
  }
  
  \draw[line width=0.1cm] plot [smooth, tension=2] coordinates {  (1,1) (2,4) (4,5) (5,3) (6,4)  (6.5,5.5) (6,7) (3.75,7) };
  
  \foreach \x in {1,3,5,7}
  {
  \filldraw[blue] (1,\x) circle (0.14);
  \filldraw[blue] (3,\x) circle (0.14);
  \filldraw[blue] (5,\x) circle (0.14);
  \filldraw[blue] (7,\x) circle (0.14);
  
  \draw (1,\x) circle ({sqrt(2)});
  \draw (3,\x) circle ({sqrt(2)});
  \draw (5,\x) circle ({sqrt(2)});
  \draw (7,\x) circle ({sqrt(2)});
  }
  \filldraw[blue] (1,-1) circle (0.14);
  \node at (2.2,-1) {{\Large \bf Sensor}};
  \draw[line width=0.1cm] (4,-1) -- (5,-1);
  \node at (6.2,-1) {{\Large \bf Intruder}};
  
  \end{tikzpicture}\\[1ex]}}
\caption{Field of sensors and the movement of intruder considered}
\label{fig:2d}
\end{figure}

As illustrated in Fig. \ref{fig:2d}, we consider a centralized control setting for a sensor network involving $N$ sensors and assume for simplicity that the sensors fully cover the area of interest. Each sensor can be either awake (i.e., active) or asleep. The control center collects sensing information periodically and then decides on the sleeping policy for the sensors. The location of the intruder at any instant can be any one of the $N$ cells corresponding to the $N$ sensors.
The intruder movement is described by a Markov chain with a probability transition matrix $P$ of size $N\times N$. Each entry $P_{ij} \in [0,1]$ of the matrix $P$ specifies the probability of the intruder moving from location $i$ to $j$. The state of this Markov chain is the current location of the intruder to within the accuracy of a sensing region. The challenge is to balance the conflicting objectives of minimizing the number of sensors awake to reduce the energy cost, while at the same time having enough number of sensors awake to ensure a good tracking accuracy.

We formulate this problem as a partially-observable Markov decision process (POMDP), in a manner similar to \cite{fuemmeler2008smart}. However, unlike their formulation, we consider infinite horizon discounted and average cost objectives as performance criteria. The rationale behind the average cost objective is to understand the steady-state system behavior, whereas the discounted cost objective is more suitable for studying the transient behavior of the system. 

MDPs \cite{BertsekasDP01} (and POMDPs) are useful frameworks for modeling real-time control problems such as the sleep scheduling that we consider in this paper. However, in practice, the transition dynamics of the MDP is unavailable and reinforcement learning (RL) approaches provide an efficient alternative.  RL comprises of simulation-based sample-path techniques that converge to a good-enough policy in the long run. The reader is referred to \cite{BertsekasT96,sutton1998reinforcement} for a comprehensive (text book) introduction to RL.

We base our solution approach on reinforcement learning (RL) formalisms, with specific emphasis on developing Q-learning \cite{watkins1992q} type algorithms for learning the optimal sleeping policy for the sensors. At the same time, to be computationally efficient, we employ linear approximation architectures. Linear function approximation allows us to handle the curse of dimensionality associated with high-dimensional state spaces (as is the case with the sleep-wake scheduling problem considered in this paper). 
To the best of our knowledge, RL with function approximation
for sleep scheduling in WSNs has not been considered previously in the literature. 

However, for problems involving high-dimensional state spaces, the Q-learning algorithm with function approximation may diverge or may show large oscillations, \cite{baird}. This is primarily due to the inherent nonlinearity in the Q-learning update rule resulting from the explicit maximization/minimization in the update procedure. To alleviate this problem, we propose a two-timescale Q-learning algorithm, borrowing the principle of using a simultaneous perturbation method for policy gradient estimation from a closely related algorithm for the discounted setting proposed in \cite{bhatnagar2012twotimescale}.

Our algorithm, proposed for continuous state-action spaces and the long-run average cost criterion, operates on two timescales and works on the simultaneous perturbation principle \cite{spall1992multivariate,Bhatnagar13SR}. In particular, after parameterizing the policy in a continuous space, the algorithm updates the policy parameter along the negative gradient direction using a well-known simultaneous perturbation method called simultaneous perturbation stochastic approximation (SPSA) \cite[Chapter 5]{Bhatnagar13SR}. In particular, we employ a one-simulation SPSA estimate on the faster timescale for obtaining the policy gradients. On the other hand, along the slower timescale an on-policy TD-like update is performed for the Q-value parameters. This timescale separation together with the policy gradient estimation using SPSA gets rid of the off-policy problem present in vanilla Q-learning with function approximation. 
The resulting algorithm turns out to be a stochastic approximation scheme on the faster timescale, but a stochastic recursive inclusion \cite[Chapter 5]{borkar2008stochastic} scheme on the slower timescale. We provide a sketch of the convergence of this algorithm, with the detailed proof being available in an appendix to this paper. To the best of our knowledge, a convergent Q-learning type algorithm with function approximation to optimize a long-run average cost criterion in a POMDP with continuous state-action spaces (as is the case with the sleep-scheduling POMDP considered), has not been proposed earlier in the literature.


We summarize our contributions as follows\footnote{A short version of this paper containing only the average cost setting and algorithms and with no proofs is available in \cite{comsnets}. The current paper includes in addition: 
\begin{inparaenum}[\bfseries(i)]
\item algorithms for the discounted cost setting; 
\item a detailed proof of convergence of the average cost algorithm using theory of stochastic recursive inclusions; and
\item detailed numerical experiments.
\end{inparaenum}}
:\\
\begin{inparaenum}[\bfseries(i)]
\item In the average cost setting, we propose a novel two-timescale algorithm that performs on-policy Q-learning while employing function approximation. This algorithm is efficient owing to linear function approximators and possesses theoretical convergence guarantees. 
For the sake of comparison, we also develop a function approximation analogue of the Q-learning algorithm. This algorithm, unlike the two-timescale variant, does not possess theoretical convergence guarantees. 
The feature selection scheme employed in each of our algorithms manages the energy and tracking components in a manner that assists the search for the optimal sleep-scheduling policy. \\
\item In the discounted setting, we adapt the recently proposed two-timescale (convergent) variant of the Q-learning algorithm \cite{bhatnagar2012twotimescale}, with function approximation. Further, for the sake of comparison, we also develop a sleep-scheduling algorithm based on Q-learning with linear function approximation. These algorithms can be seen to be the discounted-cost counterparts of the algorithms described above for the average cost setting. \\
\item We also adapt our algorithms to a setting where the mobility model of the intruder is not available. We develop a stochastic iterative scheme that estimates the mobility model and combine this estimation procedure with the average cost algorithms mentioned above using multi-timescale stochastic approximation.\\
\item We validate our algorithms on a two-dimensional network setting, while also comparing their performance with the \qmdp and FCR algorithms from \cite{fuemmeler2008smart}. Our algorithms are seen to be easily implementable, converge rapidly with a short (initial) transient period and provide more consistent results than the \qmdp and FCR algorithms.  Further, we observe that the procedure for estimating the mobility model of the intruder converges empirically to the true model. 
\end{inparaenum}

The rest of the paper is organized as follows: In Section \ref{sec:literature-review}, we review relevant literature in the area of sleep-wake scheduling as well as reinforcement learning. In Section \ref{sec:problem-formulation}, we formulate the sleep-wake scheduling problem as a POMDP.  and describe the long-run performance objectives (both average and discounted) for our algorithms. In Section \ref{sec:average-setting} we describe the long-run average cost objective and in Section \ref{sec:average-algorithms}, we present two novel RL-based sleep-wake scheduling algorithms for this setting. In Section \ref{sec:unknownP}, we present the mobility model estimation scheme. In Section \ref{sec:discounted-objective} we present the discounted cost objective and in Section \ref{sec:discounted-algorithms}, we extend the average cost algorithms to this setting.  In Section \ref{sec:simulations}, we describe the experimental setup and present the results in both average and discounted cost settings. Finally, in Section \ref{sec:conclusions}, we provide the concluding remarks and outline a few future research directions.

\section{Related Work}
\label{sec:literature-review}
Sleep scheduling is broadly related to the problem of resource allocation in wireless networks. A comprehensive survey of solution approaches, including RL-like schemes, is available in \cite{cui2012survey}. Further, considering this problem from a strategic, i.e., game-theoretic, perspective, the authors in \cite{fu2009learning} propose an auction based best-response algorithm.
A two-timescale stochastic approximation algorithm for downlink scheduling in a cellular wireless system is proposed in \cite{cui2012delay}.

 In \cite{premkumar2008optimal}, the authors formulate an MDP model for intrusion detection and present algorithms to control the number of sensors in the wake state. 
In \cite{liu2006rl,niu2010self,jianlin2009rl}, the authors propose RL based medium access control (MAC) protocols for WSNs. The algorithms proposed there attempt to maximize the throughput while being energy efficient. In \cite{liu2006rl,niu2010self}, the authors propose  Q-learning based algorithms, whereas, in \cite{jianlin2009rl}, the authors propose an algorithm based on SARSA.
In \cite{gui2004power}, the authors present two sleep scheduling algorithms for single object tracking.   In \cite{jiang2008energy}, a sleep scheduling algorithm based on the target's moving direction has been proposed. In \cite{jin2006dynamic}, the authors present a heuristic algorithm that uses dynamic clustering of sensors to  balance energy cost and tracking error. 
In \cite{beccuti2009multiple}, the problem of finding an efficient sleep-wake policy for the sensors while maintaining good tracking accuracy by solving an MDP has been studied.
In \cite{khan2012resource}, the authors propose a Q-learning based algorithm for sleep scheduling. 
In \cite{fuemmeler2008smart,fuemmeler2011sleep}, the authors propose a POMDP model for sleep-scheduling in an object tracking application and propose several algorithms based on traditional dynamic programming approaches to solve this problem. 

In comparison to previous works, we would like to point out the following:\\
\begin{inparaenum}[\bfseries(i)]
\item Some of the previously proposed algorithms, for instance \cite{premkumar2008optimal}, require the knowledge of a system model and this may not be available in practice. On the other hand, our algorithms use simulation-based values and optimize along the sample path, without necessitating a system model.
\item Some algorithms, for instance \cite{gui2004power}, work under the waking channel assumption, i.e., a setting where the central controller can communicate with a sensor that is in the sleep state. Our algorithm do not operate under such an assumption.
\item In comparison to the RL based approaches \cite{liu2006rl,niu2010self,jianlin2009rl} for transmission scheduling at the MAC layer, we would like to point out that the algorithms proposed there  
\begin{inparaenum}[\bfseries(a)]
\item employ full state representations;
\item consider discrete state-action spaces (except \cite{niu2010self} which adapts Q-learning for continuous actions, albeit with a discrete state space);
\item consider an MDP with perfect information, i.e., a setting where the states are fully observable;
\item consider only a discounted setting, which is not amenable for studying steady state system behaviour;
\item are primarily concerned with managing transmission in an energy-efficient manner and not with tracking an intruder with high-accuracy.
\end{inparaenum}  
In other words, the algorithms of \cite{liu2006rl,niu2010self,jianlin2009rl} are not applicable in our setting as we consider a partially observable MDP with continuous state-action spaces, and with the aim of minimizing a certain long-term average cost criterion that involves the conflicting objectives of reducing energy consumption and maintaining a high tracking accuracy.
\item Many RL based approaches proposed earlier for sleep scheduling (see \cite{liu2006rl,niu2010self,jianlin2009rl,rucco2013bird}) employ full state representations and hence, they are not scalable to larger networks owing to the curse of dimensionality. We employ efficient linear approximators to alleviate this.
\item While the individual agents in \cite{fu2009learning} employ a RL-based bidding scheme, their algorithm is shown to work well only empirically and no theoretical guarantees are provided. This is also the case with many of the earlier works on sleep scheduling/power management in WSNs using RL and this is unlike our two-timescale on-policy Q-learning based scheme that possesses theoretical guarantees.
\item In \cite{cui2012delay}, the authors derive an equivalent Bellman Equation (BE) after reducing the state space and establish convergence of their algorithm to the fixed point of the equivalent BE. However, there is no straightforward reduction of state space in our sleep scheduling problem and we employ efficient linear function approximators to alleviate the curse of 
dimensionality associated with large state spaces.    
\item In comparison to \cite{fuemmeler2008smart}, which is the closest related work, we would like to remark that the algorithms proposed there, for instance, \qmdp, attempt to solve a balance equation for the total cost in an
approximate fashion at each time instant and no information about
the solution thus obtained is carried forward to the future instants. Moreover, unlike \cite{fuemmeler2008smart}, we consider long-run performance objectives that enable us to study both the transient as well as steady state system behavior.
\end{inparaenum}

In general, we would like to remark that unlike previous works on sleep-scheduling, we propose RL-based algorithms that
observe the samples of a cost function from simulation and through incremental
updates find a `good enough' policy that minimizes the long-run (average or discounted) sum of this cost. The term `good enough' here refers to the solution of a balance equation for the long term costs, where function
approximation is employed to handle the curse of dimensionality. Our algorithms are simple, efficient and in the case of the two-timescale on-policy Q-learning based schemes, also provably convergent.


\section{POMDP Formulation}
\label{sec:problem-formulation}

The state $s_k$ at instant $k$ for our problem is $s_k = (l_{k},r_{k})$, where $r_k = (r_k(1), \ldots, r_k(N))$,
is the vector of residual (or remaining) sleep times, with $r_k(i)$ denoting the residual sleep time of sensor $i$ at time instant $k$. Further, $l_k$ refers to the location of the object at instant $k$ and can take values $1,\ldots,N$.
The residual sleep time vector $r_k$ evolves as follows: $\forall i=1,\ldots,N$,
\begin{equation}
\label{revolve}
 r_{k+1}(i)=(r_{k}(i)-1)\I_{\{r_{k}(i) > 0\}} + a_{k}(i)\I_{\{r_{k}(i) = 0\}}.
\end{equation}
In the above $\I_{\{C\}}$ denotes the indicator function, having the value $1$ when the condition $C$ is true and $0$ otherwise.
The first term in \eqref{revolve} indicates that the residual sleep time is decremented by $1$ if sensor $i$ is in sleep state, while the second term expresses that if sensor $i$ is in wake state, it is assigned a sleep time of $a_k(i)$. Here $a_k = (a_k(1), \ldots, a_k(N))$ denotes the chosen sleep configuration of the $N$ sensors at instant $k$. 

The single-stage cost function has two components - an energy cost for sensors in the wake state and a tracking cost. We use an energy cost $c \in (0,1)$  for each sensor that is awake and a tracking cost of $1$  if the intruder location is unknown.
Let $S_k$ denote the set of indices of sensors that are in sleep state.
Then the single-stage cost $g(s_k,a_k)$ at instant $k$ has the form,
\begin{equation}
\label{eq:single-stage-cost}
 g(s_{k},a_{k})=\sum_{\{i: r_{k}(i)=0\}}c + \mathcal{I}_{\{r_{k}(l_{k}) > 0\}}.
\end{equation}
Since the number of sensors is finite, the single-stage cost is uniformly bounded.
 The algorithms that we design subsequently find the optimal strategy for minimizing the single-stage cost \eqref{eq:single-stage-cost} in the long-run average cost sense. Note that, unlike the formulation in \cite{fuemmeler2008smart}, we do not consider a special termination state which indicates that the intruder has left the system\footnote{Since we study long-run average sum of \eqref{eq:single-stage-cost} (see \eqref{eq:averagecost} below), we can consider the problem of tracking an intruder in an infinite horizon, whereas a termination state in \cite{fuemmeler2008smart} was made necessary as they considered a total cost objective.}.   

The states, actions and single-stage cost function together constitute an MDP.
However, since it is not possible to track the intruder at each time instant (i.e., $l_k$ is not known for all $k$) as the sensors at the location from where the intruder passes at a given time instant may be in the sleep state, the problem falls under the realm of MDPs with imperfect state information, or alternatively partially observed MDP (POMDP). Following the notation from \cite{fuemmeler2008smart}, the observation $z_k$ available to the control center is given by $z_k = (s_k, o_k)$, where $s_k$ is as before and $o_k=l_k$ if the intruder location is known, or a special value $\zeta$ otherwise. Thus, the total information available to the control center at instant $k$ is given by $I_k=(z_0,\ldots,z_k,a_0,\ldots,a_{k-1})$,
where $I_0$ denotes the initial state of the system. The action $a_k$ specifies the chosen sleep configuration of the $n$ sensors and is a function of $I_k$. As pointed out in \cite{fuemmeler2008smart}, in the above POMDP setting, a sufficient statistic is $\hat s_k = (p_k, r_k)$, where $p_{k} = P(\left. l_k \right| I_k)$ and $r_k$ is the remaining sleep time mentioned above. Note that $p_{k}=(p_{k}(1),...,p_{k}(N))$ is the distribution at time step $k$ of the object being in one
of the locations $1,2,...,N$ and evolves according to
\begin{equation}
\label{pevolve}
 p_{k+1}= e_{l_{k+1}}\mathcal{I}_{\{r_{k+1}(l_{k+1})=0\}} + p_{k} P \mathcal{I}_{\{r_{k+1}(l_{k+1}) > 0\}},
\end{equation}
where $e_i$ denotes an $N$-dimensional unit vector with $1$ in the $i$th position and $0$ elsewhere.
The idea behind the evolution of $p_k$ is as follows:\\
\begin{inparaenum}[\bfseries(i)] 
\item the first term refers to the case when the location of the intruder is known, i.e., the sensor at $l_{k+1}$ is in the wake state; \\
\item the second term refers to the case when intruder's location is not known and hence, the intruder transitions to the next distribution $p_{k+1}$ from the current $p_k$ via the transition probability matrix $P$.\\ 
\end{inparaenum} 
Note that the evolution of $p_k$ in our setting differs from \cite{fuemmeler2008smart}, as we do not have the termination state.
With an abuse of terminology, henceforth we shall refer to the sufficient statistic $\hat s_k$ as the state vector in the algorithms we propose next. Further, we would like to emphasize here that our algorithms do not require full observation of the state vector. Instead, by an intelligent choice of features that rely only on $p_k$, the algorithms obtain a sleeping policy that works well.

\section{Average Cost Setting}
\label{sec:average-setting}
The long-run average cost $J(\pi)$ for a given policy $\pi$ is defined as follows:
\begin{equation}
\label{eq:averagecost}
J(\pi) = \lim_{N \rightarrow \infty}\frac{1}{N} \sum_{n=0}^{N-1} g(s_n,a_n),
\end{equation}
starting from any given state $i$ (i.e., with $s_0 = i$). In the above, the policy $\pi=\{\pi_0$, $\pi_1$, $\pi_2, \ldots \}$ with 
$\pi_n$ governing the choice of action $a_n$ at each instant $n$.

The aim here is to find a policy $\pi^*= \mathop{{\rm argmin}}_{\pi \in \Pi} J(\pi)$, where $\Pi$ is the set of all admissible policies. A policy $\pi$ is admissible if it suggests a feasible action at each time instant $n$.

Let $h(x)$ be the differential cost function corresponding to state $x$, under policy $\pi$. Then, 
\begin{equation}
h(x) = \sum\limits_{n=1}^{\infty} E\left[ g(s_n,a_n) - \left. J(\pi) \right| s_0=x, \pi \right],
\label{vf-a}
\end{equation}
is the expected sum of the differences between the single-stage cost and the average cost under policy $\pi$ when $x \in \S$ is the initial state. 
Let $J^* = \min_{\pi \in \Pi} J(\pi) \stackrel{\triangle}{=} J(\pi^*)$ denote the optimal average cost and let $h^*$ denote the optimal differential cost function corresponding to the policy $\pi^*$. Then, $(J^*,h^*(x)),x\in \S$ satisfy the following Bellman equation (see \cite{puterman}):
\begin{equation}
J^* + h^*(x) = \min_a ( g(x,a) + \int p(x,a,dy) h^*(y) ), \forall x \in \S,
\label{bf-a}
\end{equation}
where $p(x,a,dy)$ denotes the transition probability kernel of the underlying MDP. 
Now, define the optimal Q-factors $Q^*(x,a) , x \in \S, a \in \A(x)$ as
\begin{equation}
 Q^*(x,a) =  g(x,a) + \int p(x,a,dy) h^*(y).
\label{qf}
\end{equation}
From \eqref{bf-a} and \eqref{qf}, we have
\begin{equation}
J^* + h^*(x) = \min_a Q^*(x,a), \forall x \in \S.
\label{qf_1}
\end{equation}
Now from \eqref{qf} and \eqref{qf_1}, we have
\begin{align}
 Q^*(x,a) = g(x,a) + \int p(x,a,dy)(\min_b Q^*(y,b) - J^*) & \mbox{ or}
\nonumber\\
J^* + Q^*(x,a) =   g(x,a) + \int p(x,a,dy)\min_{b} Q^*(y,b),&
\label{be-a}
\end{align}
for all $x \in \S, a \in \A(x)$.
An advantage with \eqref{be-a} is that it is amenable to stochastic approximation because the minimization is now (unlike \eqref{bf-a}) inside the conditional expectation.
However, in order to solve \eqref{be-a}, one requires knowledge of the transition kernel $p(x,a, dy)$ that
constitutes the system model. Moreover, one requires the state and action spaces to be manageable in size.
The algorithms presented subsequently work under lack of knowledge about the system model  and further, are able to effectively handle large state and action spaces by incorporating feature based representations and function approximation.

For the two-timescale on-policy Q-learning scheme (TQSA-A), we consider a parameterized set of policies that satisfy the following assumption:
\begin{assumption}
\label{a1}
For any state-action pair $(x,a)$, $\pi_w(x,a)$ is continuously differentiable in the parameter $w$.
\end{assumption}
The above is a standard assumption in policy gradient RL algorithms (cf. \cite{bhatnagar2009natural}).
A commonly used class of distributions that satisfy this assumption for the policy $\pi$ is the parameterized Boltzmann family, where the distributions have the form
\begin{equation}
\pi_{w}(x,a) = \frac{e^{w^{\top} \sigma_{x,a}}}{\sum_{a' \in {A(x)}} e^{w^{\top} \sigma_{x,a'}}},
\hspace{6pt} \forall x \in \S\;,\;\forall a \in \A(x).
\label{eq:pi}
\end{equation}
In the above, the parameter $w=(w_1,\ldots,w_N)^T$ is assumed to take values in a compact and convex set $C \subset \mathbb{R}^{N}$. 

Before we proceed further, it is important to note there that in our setting, we have a continuous state-action space. Hence, to implement our Q-learning algorithms, we discretize the space to a finite grid (as is commonly done in practice). In what follows, we shall consider $(x,a), (y,b)$ to take values on the aforementioned finite grid of points and $p(x,a,y)$ to denote the transition probabilities of the resulting Markov chain.

\section{Average Cost Algorithms}
\label{sec:average-algorithms}

For the ease of exposition, we first describe the Q-learning algorithm that uses full-state representations. Next, we discuss the difficulty in using this algorithm on a high-dimensional state space (as is the case with the sleep-wake control MDP) and subsequently present our average cost algorithms that employ feature based representations and function approximation to handle the curse of dimensionality.

\subsection{Q-learning with full state representation}
This algorithm, proposed in \cite{abounadi2002learning}, is based on the relative Q-value iteration (RQVI) procedure.
Let $s_{n+1}$ denote the state of the system at instant $(n+1)$ when the state at instant
  $n$ is $x$ and action chosen is $a$. Let $Q_{n}(x,a)$ denote the Q-value estimate at instant $n$
  associated with the tuple $(x,a)$.
The RQVI scheme (assuming a finite number of state-action tuples)
\begin{align}
Q_{n+1}(x,a)  = &  g(x,a) + \sum_{y} p(x,a,y) \underbrace{\min_{b\in \A(y)} Q_n(y,b)}_{\bf (I)}  - \underbrace{\min_{r \in \A(s)} Q_n(s,r)}_{\bf (II)},
\label{rvi}
\end{align}
where $s \in S$ is a prescribed (arbitrarily chosen) state\footnote{A simple rule to choose a state $s$ such that there is a positive probability of the underlying MDP visiting $s$. Such a criterion ensures that the term (II) of \eqref{rvi} converges to the optimal average cost $J^*$.}. 
Note, unlike the value iteration scheme for discounted MDPs, the recursion \eqref{rvi} includes an additional term (see (II) in \eqref{rvi}). This term arises due to the nature of the Bellman equation for average cost MDPs (see \eqref{be-a}) that also contains the optimal average cost $J^*$.
Here the state $s$ can be arbitrarily chosen because one is interested in estimating not just the average cost, but also the differential cost function. This results in solving a system of $(n+1)$ unknowns using $n$ equations. In order to make this system feasible, one fixes the differential cost for one of the (arbitrarily chosen) state to be a fixed value and then solves for the remaining $n$ values using the system of $n$ equations.
 It has been shown in \cite{abounadi2002learning} that term (II) in \eqref{rvi} converges to $J^*$ and term (I) in \eqref{rvi} converges to the optimal differential cost function $h^*(\cdot)$.
  
The Q-learning algorithm for the average cost setting estimates the `Q-factors' $Q(x,a)$ of all feasible state-action tuples $(x,a)$,
  i.e., those with $x \in S$ and $a \in \A(x)$ using the stochastic approximation version of \eqref{rvi}. The update rule for this algorithm is given by
  \begin{align}
Q_{n+1}(x,a)  = & Q_{n}(x,a) + a(n) ( g(x,a) + \min_{b\in A(y)} Q_n(y,b)  - \min_{r \in A(s)} Q_n(s,r) ),
\label{eqn:qtlcfsac}
\end{align}
for all $x \in S$ and $a \in\ A(x)$.
In the above, $y$ is the simulated next state after $x$ when action $a$ is chosen in state $x$ and $a(n), n \ge 0$ are the step-sizes that satisfy the standard stochastic approximation conditions, i.e., $\sum_n a(n) = \infty$ and $\sum_n a(n)^2 < \infty$.
The last term $\min_{r \in A(s)} Q_n(s,r)$ in (\ref{eqn:qtlcfsac}) asymptotically converges to the optimal average cost per stage.
Further, the iterates in \eqref{eqn:qtlcfsac} converge to the optimal Q-values $Q^{*}(i,a)$ that satisfy the corresponding
Bellman equation (\ref{be-a}) and $\min_{a \in A(i)} Q_n(i,a)$ gives the optimal differential cost $h^*(i)$. The optimal action in state $i$ corresponds to
$ \mathop{{\rm argmin}}_{a\in A(i)} Q^{*}(i,a)$. 

\subsection{Need for function approximation}
While Q-learning does not require knowledge of the system model, it does suffer from the computational problems associated with
large state and action spaces as it stores the $Q(s,a)$ values in a look-up table and requires updates of all $Q(s,a)$ values at
each step for convergence. In our setting, this algorithm  becomes intractable as the state-action space becomes very large. Even when we
quantize probabilities as multiples of $0.01$, and with $7$ sensors, the cardinality of the state-action space
$|S \times A(S)|$ is approximately $100^{8} \times 4^{7} \times 4^{7}$ if we use an upper bound of $3$ for the sleep time alloted to any sensor.  The situation gets aggravated
when we consider larger sensing regions (with corresponding higher number of sensors). To deal with this problem of the curse of
dimensionality, we develop a feature based Q-learning algorithm as in \cite{la2011reinforcement}.
While the full state Q-learning algorithm in (\ref{eqn:qtlcfsac}) cannot be used on even moderately sized sensing
regions, its function approximation based variant can be used over larger network settings.

\subsection{Algorithm Structure}
Both our algorithms parameterize the Q-function using a linear approximation architecture as follows:
\begin{equation}
 Q(s,a) \approx \theta^T \sigma_{s,a}, \quad \forall s \in S, a \in A(s).
 \label{eqn:Q_FA}
\end{equation}
In the above, $\sigma_{s,a}$ is a given  $d$-dimensional feature vector associated with the state-action tuple $(s,a)$, where  $d << |S \times A(S)|$ and $\theta$ is a tunable $d$-dimensional parameter. The Q-value parameter $\theta = (\theta_1, \ldots, \theta_d)^T$ is assumed to take values in a compact and convex set $D \subset \mathbb{R}^d$.

Our algorithms are online, incremental and obtain the sleeping policy by sampling from a trajectory of the system. After observing a simulated sample of the single-stage cost, the parameter $\theta$ is updated in the negative descent direction in both our algorithms as follows:
\begin{equation}
\label{eqn:theta_estimate}
\theta_{n+1} = \Gamma(\theta_n - a(n) \sigma_{s_n,a_n} m_n),
\end{equation}
where $m_n$ is an algorithm-specific magnitude term and $\Gamma$ is a projection operator that keeps the parameter $\theta$ bounded (a crucial requirement towards ensuring convergence of the scheme). Further, $a(n)$ are the step-sizes that satisfy standard stochastic approximation conditions.
Note that $\nabla_\theta Q(s,a) = \sigma_{s,a}$ and hence \eqref{eqn:theta_estimate} updates the parameter $\theta$ in the negative descent direction. 
The overall structure of our algorithms is given in Algorithm \ref{algorithm:structure}.

\begin{algorithm}
\caption{Structure of our algorithms}
\label{algorithm:structure}
\begin{algorithmic}[1]
\STATE {\bf Initialization:} policy parameter $\theta=\theta_0$; initial state $s_0$
\FOR{$n = 0,1,2,\ldots$}
  \STATE {Take action $a_{n}$ based on a (algorithm-specific) policy depending on $\theta_n$.}
  \STATE {Observe the single-stage cost $g(s_{n},a_{n})$ and the next state $s_{n + 1}$.}
  \STATE {Update $\theta_{n + 1}$ in a algorithm-specific manner.}
\ENDFOR
\STATE {\bf return} Q-value parameter $\theta$, policy parameter $w$.
\end{algorithmic}
\end{algorithm}

\subsection{Feature selection}
\label{sec:features}
\begin{figure}
\centering
\begin{tikzpicture}
    \begin{axis}[
        width=10cm, height=7cm,     
        grid = major,
        grid style={dashed, gray!30},
        xmin=0,     
        xmax=1,    
        ymin=0,     
	yticklabels={,,},
	xticklabels={,,},
        axis background/.style={fill=white},
        ylabel={\large Cost $\in (0,1)$},
        xlabel={\large Sleep time (seconds)},
        tick align=outside]
      \addplot[domain=0:1, blue, thick] 
         {e^(-3*x)}; 
         \addlegendentry{Energy cost}
      \addplot[domain=0:1, red, thick] 
         {1-e^(-3*x)}; 
         \addlegendentry{Tracking error}
      \addplot[domain=0:1, black, dotted,very thick]
      {0.55};
      \addplot[domain=0:1, black, dotted,very thick]
      {0.45};
      \addplot [black,  nodes near coords={\Large{$\bm\xi$}},every node near coord/.style={anchor=180}] coordinates {( 0.75, 0.5)};
      \addplot [black, nodes near coords={\Large{$\bm\downarrow$}},every node near coord/.style={anchor=180}] coordinates {( 0.75, 0.6)};
      \addplot [black, nodes near coords={\Large{$\bm\uparrow$}},every node near coord/.style={anchor=180}] coordinates {( 0.75, 0.4)};
    \end{axis} 
\end{tikzpicture}
\caption{Idea behind the feature selection scheme}
\label{fig:features}
\end{figure}
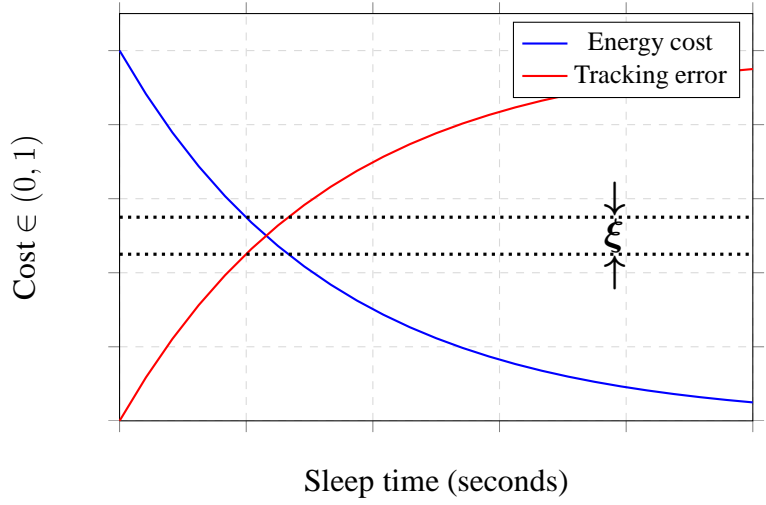

The idea behind the feature selection scheme is to select an energy-efficient sleep configuration, i.e., a configuration that  keeps as many sensors in the wake state as possible to track the intruder while at the same time has minimal energy cost.
This is done by first pruning the actions so as to select only those actions that ensure that the energy cost is $\xi$-close to the tracking error and then, among the $\xi$-optimal actions, selecting an action that minimizes the approximate Q-value. 

Formally, the choice of features is given by
\begin{equation}
\sigma_{s_n,a_n}  =  (\sigma_{s_{n},a_{n}}(1),...,\sigma_{s_{n},a_{n}}(N))^T,
\end{equation}
where $\sigma_{s_{n},a_{n}}(i), i \le N$ is the feature value corresponding to sensor $i$. These values are defined as follows:
\begin{align}
\label{eq:rho}
\delta^{a_n(i)}_n=\underbrace{\dfrac{1}{(a_{n}(i)+1)}}_{{\bf energy cost}}-\underbrace{\dfrac{\sum_{j=1}^{a_{n}(i)}[pP^{j}]_{i}}{\sum_{j=1}^{\infty}[pP^{j}]_{i}}}_{{\bf tracking error}},\\[1ex]
\label{eq:features}\sigma_{s_n,a_n}(i)=
\begin{cases}  \delta^{a_n(i)}_n &\text{if }0 \leq |\delta^{a_n(i)}_n| \leq \xi,
\\
\top &\text{ otherwise.}
\end{cases}
\end{align}
In the above, $\top$ is a fixed large constant used to prune out the actions that are not $\xi$-close.
The above choice of features involve pruning of actions, which is explained as follows:
Consider an action $a_{n}(i)$ for the sensor $i$ at time instant $n$.
The sum of probabilities that the intruder will be at location $i$, over time instants $1,\ldots, a_n(i)$ is a measure of the tracking error. On the other hand, the energy saved by having sensor $i$ sleep for $a_n(i)$ time units is proportional to $\frac{c}{a_{n}(i)+1}$. As illustrated in Fig. \ref{fig:features}, the tracking error increases with the sleep time (dictated by the choice of $a_{n}(i)$), while the energy cost decreases. Thus, $\delta^{a_n(i)}_n$ measures the distance between the energy cost and tracking errors. Next, as illustrated with the two-dashed lines in Fig. \ref{fig:features}, we now consider all those actions $a_n(i)$ such that the above two components are within $\xi$ distance of each other (i.e., $|\delta^{a_n(i)}_n| \leq \xi$) and set the feature value $\sigma_{s_n,a_n}$ to the above difference. On the other hand, for those actions that are outside the $\xi$-boundary, we set $\sigma_{s_n,a_n}$ to a large constant, which ensures they are not selected. 

In the following section, we present the QSA-A algorithm for sleep-wake scheduling and subsequently present the second algorithm (TQSA-A).  The latter algorithm (TQSA-A) is a convergent algorithm, unlike QSA-A.

\subsection{Q-learning based Sleep--wake Algorithm (QSA-A)}
\label{sec:qsa-a}
This is the function approximation analogue of the Q-
learning with average cost algorithm \cite{abounadi2002learning}.
Let $s_n, s_{n+1}$ denote the state at instants $n, n+1$, respectively, measured online. Let
$\theta_n$ be the estimate of the parameter $\theta$ at instant $n$. Let $s$ be any fixed state in $S$.
The algorithm QSA-A uses the following update rule:
\begin{align}
\theta_{n+1}  = & \theta_n + a(n) \sigma_{s_n,a_n}\bigg(g(s_n,a_n) + \underbrace{\min_{v\in A(s_{n+1})} \theta_n^T \sigma_{s_{n+1},v}}_{\bf (I)} 
                  - \underbrace{\min_{r\in A(s)} \theta_n^T \sigma_{s,r}}_{\bf (II)} - \theta_n^T \sigma_{s_n,a_n} \bigg),
\label{eq:qsa-a-update-rule}
\end{align}
where $\theta_0$ is set arbitrarily. In (\ref{eq:qsa-a-update-rule}), the action $a_n$ is chosen in state $s_n$ according to
an $\epsilon$-greedy policy, i.e., with a probability of ($1-\epsilon$), a greedy action given by\\
$a_{n} = \mathop{{\rm argmin}}_{v\in A(s_n)} \theta_n^T \sigma_{s_{n},v}$ is chosen and with probability $\epsilon$, an action in $A(s_n)$ is randomly chosen.
Using $\epsilon$-greedy policy for the regular Q-learning algorithm has been well recognized and recommended in the literature (cf. \cite{sutton1998reinforcement,BertsekasT96}). 

\subsection{Two-timescale Q-learning based sleep--wake algorithm (TQSA-A)}
\label{sec:tqsa-a}
Although Q-learning with function approximation has been shown to work well in several applications in practice, establishing a proof of convergence of this algorithm is theoretically difficult. A simple counterexample that illustrates the chattering phenomenon when Q-learning is combined with function approximation is provided in \cite[Section 6.4]{BertsekasT96}. Moreover, there have also been practical instances where the iterates of QSA-A have been shown to be unstable (cf. \cite{prashanth2011reinforcement}). 

The problem is complicated due to the \emph{off-policy} nature of QSA-A. The \emph{off-policy} problem here arises because of the presence of the {\em min} operation in the Q-learning algorithm that introduces nonlinearity in the update rule (see term (I) in \eqref{eq:qsa-a-update-rule}). There is also a minor problem of estimating the average cost that involves a min operation as well (see term (II) in \eqref{eq:qsa-a-update-rule}). The latter problem can be solved by estimating the average cost in a separate recursion (as we do in \eqref{eq:tqsa-avg}) and using this estimate in place of the term (II). 

A nested two-loop procedure to overcome the off-policy problem works as follows:
\begin{description}
\item[{\em Inner loop.}] Instead of the first min operation (term (I) in \eqref{eq:qsa-a-update-rule}), employ a stochastic gradient technique to find the best action that minimizes the approximate Q-value function. A popular scheme for estimating the gradient of a function from simulation is SPSA and we employ a one-simulation SPSA scheme with deterministic perturbations for estimating $\nabla_w Q(s,a)$.
\item[{\em Outer loop.}] instead of the {\em min} operation, actions are selected according to a given policy, then the Q-learning update would resemble a temporal difference (TD) learning update for the joint (state-action) Markov chain. It has been shown in \cite{tsitroy1} that TD with linear function approximation converges. 
\end{description}
For ensuring convergence of the above procedure, one would have to run the two loops in a serial fashion for sufficiently long duration. This may be time-consuming and also result in slow convergence. To overcome this problem, we employ multi-timescale stochastic approximation \cite[Chapter 6]{borkar2008stochastic} to mimic the two-loop behavior, albeit with different step-sizes for the inner and outer loops. In other words, both the loops are allowed to run simultaneously, with a larger step-size for the inner loop and a smaller one for the outer loop. This achieves the effect of the nested loop procedure, while ensuring rapid convergence. 

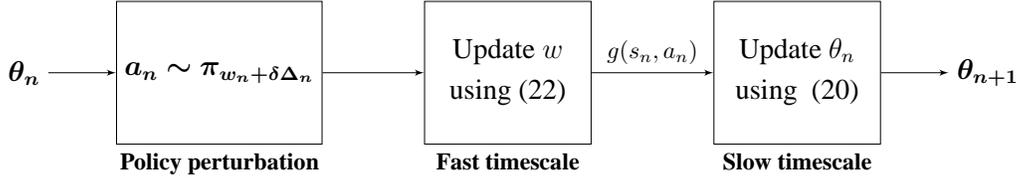
\begin{figure}
\centering
\tikzstyle{block} = [draw, fill=white, rectangle,
   minimum height=6em, minimum width=7em]
\tikzstyle{sum} = [draw, fill=white, circle, node distance=1cm]
\tikzstyle{input} = [coordinate]
\tikzstyle{output} = [coordinate]
\tikzstyle{pinstyle} = [pin edge={to-,thin,black}]
\scalebox{0.9}{\begin{tikzpicture}[auto, node distance=2cm,>=latex']
\node (theta) {\large $\bm{\theta_n}$};
\node [block, right=1cm of theta, label=below:{\bf Policy perturbation}] (psim) {\large$\bm{a_n \sim \pi_{w_n+\delta\Delta_n}}$}; 
\node [block, right=1.5cm of psim, label=below:{\bf Fast timescale}] (spsa) {\makecell[c]{\large Update $w$ \\[1ex] \large using \eqref{eq:tqsa-a-w-update}}}; 
\node [block, right=1.8cm of spsa, label=below:{\bf Slow timescale}] (update) {\makecell[c] {\large Update $\theta_n$ \\[1ex] \large using ~\eqref{eq:tqsa-a-update}}};
\node (thetanext)[right=1cm of update] {\large $\bm{\theta_{n+1}}$};
\draw [->] (theta) --  (psim);
\draw [->] (psim) --  (spsa);
\draw [->] (spsa) -- node {$g(s_n,a_n)$}  (update);
\draw [->] (update) --  (thetanext);
\end{tikzpicture}}
\caption{Overall flow of the TQSA-A algorithm.}
\label{fig:algorithm-flow}
\end{figure}

Recall that we consider a class of parameterized policies satisfying Assumption \ref{a1}\footnote{
One may use an  $\epsilon$-greedy policy for TQSA-A as well, however, that will result in additional exploration. Since TQSA-A updates the parameters of an underlying parameterized Boltzmann policy (which by itself is randomized in nature), we do not need an extra exploration step in our algorithm.}.
As illustrated in Fig. \ref{fig:algorithm-flow}, the idea in the gradient estimate is to simulate the system with the perturbed policy parameter $w + \delta \Delta$, where $\delta >0$ is a fixed small constant and $\Delta = (\Delta_1,\ldots,\Delta_N)^T$ are perturbations constructed using certain Hadamard matrices (see Lemma 3.3 of \cite{bhatnagar2003two} for details of the construction). Given the output from the perturbed simulation,  the gradient of the approximate Q-value function $Q(s,a) \approx \theta^T \sigma_{s,a}$ is estimated as:
\begin{equation}
\label{eq:spsa-estimate}
    \nabla_w Q(s,a) \approx \dfrac{\theta^{T}\sigma_{s,a}}{\delta}\Delta^{-1},
\end{equation}
where $\Delta^{-1} \stackrel{\triangle}{=} (\Delta_1^{-1},\ldots,\Delta_N^{-1})^T$.
 It has been shown in \cite{bhatnagar2003two} that an incremental stochastic recursive algorithm that incorporates the RHS of \eqref{eq:spsa-estimate} as its update direction essentially performs a search in the gradient direction when $\delta$ is small.

 The overall update of the TQSA-A proceeds on two different timescales as follows:\\
\begin{inparaenum}[\bfseries(i)]
    \item On the faster timescale, the policy parameter is updated along a gradient descent direction using an SPSA estimate \eqref{eq:spsa-estimate};\\
    \item On the slower timescale, the average cost \eqref{eq:averagecost} is estimated and
    \item Also, on the slower timescale, the Q-value parameter is updated in an on-policy TD algorithm-like fashion.\\
\end{inparaenum}
The update rule for the TQSA-A algorithm is given as follows: $\forall n \ge 0$,
\begin{align}
 \label{eq:tqsa-a-update}
 \theta_{n+1} = & \Gamma_1 \bigg( \theta_n + b(n)\sigma_{s_n,a_n}(g(s_n,a_n) - \hat J_{n+1}  +  \theta_{n}^{T}\sigma_{s_{n+1},a_{n+1}} -\theta_{n}^{T}\sigma_{s_n,a_n})\bigg),\\
  \hat J_{n+1} = & \hat J_n + c(n) \left( g(s_n,a_n) - \hat J_n\right),\label{eq:tqsa-avg}\\
\label{eq:tqsa-a-w-update}
  w_{n+1} = & \Gamma_2 \left ( w_n - a(n)\dfrac{\theta_n^{T}\sigma_{s_n,a_n}}{\delta}\Delta_{n}^{-1} \right ).
\end{align}
In the above,
the choice of features $\sigma_{s_n,a_n}$ is the same as in the algorithm, QSA-A and is described in Section \ref{sec:features}.
$\Gamma_{1}:\mathbb{R}^{d}\rightarrow D$, $\Gamma_{2}:\mathbb{R}^{N} \rightarrow C$ are certain projection operators that project the iterates $\theta_{n}$ and $w_n, n \ge 1$ to certain prescribed compact and convex subsets $D$ and $C$ of $\R^d$ and $\R^N$, respectively. The recursions \eqref{eq:tqsa-a-update} and \eqref{eq:tqsa-a-w-update} remain stable because of these projection operators, a crucial requirement for convergence of TQSA-A. 
The step-sizes $b(n),c(n),a(n)$ satisfy the following assumption:
\begin{assumption}    
\label{a2}
\begin{align*}
\sum_{n} a(n) =  \sum_{n} b(n) =\infty, \sum_n (a^2(n) + b^2(n)) < \infty, \lim\limits_{n \rightarrow \infty} \dfrac{b(n)}{a(n)} = 0.
\end{align*}
  Further, $c(n) = k a(n)$ for some $k>0$.
  \end{assumption}
While the first two conditions above are standard in stochastic approximation for step-sizes, the last condition, i.e., $\frac{b(n)}{a(n)} \rightarrow 0$ ensures the necessary timescale separation between policy and Q-value parameter updates. In particular, it guarantees that the policy parameter $w$ is updated on the faster timescale and average cost $\hat J$ and Q-value parameter $\theta$ are updated on the slower timescale. 

It turns out that because of the timescale difference, the recursion \eqref{eq:tqsa-a-w-update} converges almost surely to a set $w(\theta)$ that is a function of parameter $\theta$ and is seen to be a compact subset of $\mathbb{R}^N$. Further, the slower recursion \eqref{eq:tqsa-a-update} can be seen to track a differential inclusion and converges almost surely to a closed connected internally chain transitive invariant set of this differential inclusion. This claim is made precise by the convergence result in the following section.

\subsection{Convergence of TQSA-A}
We outline the proof of convergence of the TQSA-A algorithm, with the details being available in an appendix to this paper. In addition to assumptions \ref{a1} and \ref{a2}, we make the following assumption for the analysis.

\begin{assumption}
The Markov chain induced by any policy $w$ is irreducible and aperiodic.
\end{assumption}
The above ensures that each state gets visited an infinite number of times over an infinite time horizon and is standard in policy gradient RL algorithms.

The ODE approach is adopted for analyzing the convergence of $\theta$ and $w$ recursions \eqref{eq:tqsa-a-update}. In essence, the two-timescale stochastic approximation architecture employed in the TQSA-A algorithm allows
\begin{inparaenum}[(i)] \item the faster timescale analysis of the $w$-recursion in \eqref{eq:tqsa-a-update} assuming that the slower $\theta$-recursion is constant (quasi-static), and
\item the slower timescale analysis of the $\theta$-recursion in \eqref{eq:tqsa-a-update} assuming that the faster $w$-recursion has converged.
\end{inparaenum}
The convergence analysis comprises of the following important steps:
\begin{itemize}
    \item Theorem \ref{thm:main}, in effect, states that the $w$-recursion performs a gradient descent using one-simulation SPSA and converges to a set of points in the neighborhood of the local minimum of the approximate Q-value function $R(\theta, w)$ (defined below). Note that this analysis is for the $w$-recursion on the faster timescale, assuming the Q-value function parameter $\theta$ to be a constant.
    \item Analyzing the $\theta$-recursion on the slower timescale, Theorem \ref{prop1} claims that the iterate $\theta$ asymptotically converges to a closed connected internally chain transitive set associated with a corresponding differential inclusion (DI).
\end{itemize}
We present below the precise statements of these results.
Let ${\cal C}(C) ({\cal C}(D))$ denote the space of all continuous functions
from $C$  to ${\cal R}^N$ ($D$  to ${\cal R}^d$).
We define the operator $\hat{\Gamma}_2: {\cal C}(C) \rightarrow {\cal C}({\cal R}^N)$ as follows:
\begin{align*}
\hat{\Gamma}_2(v(w)) = &\lim_{\alpha \downarrow 0} \left(
\frac{\Gamma_2(w+\alpha v(w)) - w}{\alpha}\right).
\end{align*}
Consider the ODE associated with the $w$-recursion on the faster timescale, assuming $\theta(t)\equiv \theta$ (a constant independent of $t$):
\begin{equation}
\label{ode:d1}
\dot{w}(t) = \hat{\Gamma}_2\left(-\nabla_w R(\theta,w(t)) \right).
\end{equation}

Theorem \ref{thm:main} establishes that the $w$-recursion tracks the above ODE.
In the above,
\[R(\theta,w) \stackrel{\triangle}{=} \sum_{i\in S, a\in A(i)} f_{w}(i,a)\theta^T\sigma_{i,a},\]
where $f_{w}(i,a)$ are the stationary probabilities
$f_{w}(i,a)=d^{\pi_{w}}(i)\pi_{w}(i,a)$, $i\in S$, $a\in A(i)$ for the joint process $\{(X_n, Z_n)\}$, obtained from the state-action tuples at each instant. Here $d^{\pi_{w}}(i)$ is the stationary probability for the Markov chain $\{X_n\}$ under policy $\pi_w$ being in state $i\in S$. 
Let $K_\theta$ denote the set of asymptotically stable equilibria
of (\ref{ode:d1}), i.e., the local minima of the function
$R(\theta,\cdot)$ within the constraint set $C$.
Given $\epsilon >0$, let $K_\theta^\epsilon$ denote the $\epsilon$-neighborhood of $K_\theta$, i.e.,
\[ K_\theta^\epsilon = \{w\in C\mid \parallel w-w_0\parallel <\epsilon, w_0 \in K_\theta\}.\]

\begin{theorem}
\label{thm:main}
Let $\theta_n \equiv \theta, \forall n$, for some $\theta\in D\subset {\cal R}^d$. Then,
given $\epsilon > 0$, there exists $\delta_0 > 0$
such that for all $\delta \in (0,\delta_0]$,
$\{w_n\}$ governed by (\ref{eq:tqsa-a-update})
converges the set $K_\theta^{\epsilon}$ a.s.
\end{theorem}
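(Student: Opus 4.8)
The plan is to treat the $w$-recursion \eqref{eq:tqsa-a-w-update} as a single-timescale projected stochastic approximation scheme and apply the ODE method (as in \cite{borkar2008stochastic,Bhatnagar13SR}), showing that the iterates asymptotically track the projected gradient-descent ODE \eqref{ode:d1} up to an $O(\delta)$ error. Since $\theta_n\equiv\theta$ is frozen, the only randomness in the update direction $\frac{\theta^{T}\sigma_{s_n,a_n}}{\delta}\Delta_n^{-1}$ enters through the state-action samples $(s_n,a_n)$ drawn along the trajectory of the joint Markov chain under the perturbed policy $\pi_{w_n+\delta\Delta_n}$, together with the deterministic Hadamard perturbation sequence $\{\Delta_n\}$. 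I would first dispatch the standard prerequisites: the update direction is uniformly bounded (the grid $S\times A(S)$ is finite, the features $\sigma_{i,a}$ and $\theta\in D$ are bounded, and the entries of $\Delta_n^{-1}$ are bounded by construction), and the projection $\Gamma_2$ onto the compact convex set $C$ renders the iterates automatically stable.

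The core step is a martingale-plus-Markov-noise decomposition of the drift. I would express the conditional expectation of the update direction, given the past, as an average over the transition kernel induced by $\pi_{w_n+\delta\Delta_n}$, peeling off a square-integrable martingale-difference term whose cumulative contribution vanishes a.s.\ under Assumption~\ref{a2}. For the residual Markov noise I would use the ergodicity granted by the irreducibility-aperiodicity assumption to solve the associated Poisson equation, thereby replacing the single sample $\theta^{T}\sigma_{s_n,a_n}$ by its stationary average $R(\theta,w_n+\delta\Delta_n)$ up to asymptotically negligible corrections. After this averaging the effective drift of \eqref{eq:tqsa-a-w-update} becomes the one-simulation SPSA estimate $\frac{R(\theta,w_n+\delta\Delta_n)}{\delta}\Delta_n^{-1}$.

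It then remains to control the SPSA bias. Expanding $R(\theta,w+\delta\Delta)$ to second order in $\delta$ and dividing by $\delta\Delta_i$ produces the partial derivative $\partial R/\partial w_i$, together with a spurious $O(1/\delta)$ term proportional to $R(\theta,w)\Delta_i^{-1}$ and off-diagonal cross terms $\tfrac{\Delta_j}{\Delta_i}\,\partial R/\partial w_j$ with $j\neq i$. The Hadamard-matrix construction of $\{\Delta_n\}$ (Lemma 3.3 of \cite{bhatnagar2003two}) is designed precisely so that, summed over one cycle of the perturbation sequence, both the $O(1/\delta)$ term and the cross terms cancel, leaving $\nabla_w R(\theta,w)+O(\delta)$. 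Hence the iterates track the $\delta$-perturbed ODE $\dot w(t)=\hat\Gamma_2\big(-\nabla_w R(\theta,w(t))+b_\delta(w(t))\big)$ with $\|b_\delta\|=O(\delta)$ uniformly on $C$.

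Finally, I would identify $R(\theta,\cdot)$ as a Lyapunov function for the limiting ($\delta=0$) projected ODE \eqref{ode:d1}, so that its asymptotically stable equilibria are exactly the local minima $K_\theta$ of $R(\theta,\cdot)$ on $C$. A standard perturbation argument (Hirsch's lemma, cf.\ \cite{borkar2008stochastic}) then yields, for each $\epsilon>0$, a $\delta_0>0$ such that for all $\delta\in(0,\delta_0]$ the trajectories of the $O(\delta)$-perturbed ODE, and hence the iterates $\{w_n\}$, converge a.s.\ into $K_\theta^{\epsilon}$. The main obstacle I anticipate is reconciling the two layers of averaging on a single timescale: the ergodic averaging of the Markov noise must be fast enough within each deterministic-perturbation cycle that the Poisson-equation corrections do not contaminate the Hadamard cancellations that generate the gradient estimate; making these compatible is the delicate part of the argument.
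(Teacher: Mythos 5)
Your outline is correct and follows essentially the same route as the paper: the paper's own proof of Theorem \ref{thm:main} is a direct citation to \cite[Theorem 2, Section 3]{bhatnagar2012twotimescale}, and the argument there is exactly the one you sketch --- stability via the projection $\Gamma_2$ onto the compact set $C$, martingale-plus-Markov-noise averaging of the drift of \eqref{eq:tqsa-a-w-update} to $R(\theta,w_n+\delta\Delta_n)\Delta_n^{-1}/\delta$, cancellation of the $O(1/\delta)$ and cross terms over one Hadamard perturbation cycle, and a Hirsch-lemma perturbation argument giving convergence to $K_\theta^{\epsilon}$ for all sufficiently small $\delta$. The single-timescale interaction between the ergodic averaging and the deterministic perturbation cycle that you flag as the delicate point is precisely what is worked out in \cite{bhatnagar2003two,bhatnagar2012twotimescale}.
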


We now analyze the $\theta$-recursion, which is the slower recursion in \eqref{eq:tqsa-a-update}.
Let \\$T_w:{\cal R}^{|S\times A(S)|}\rightarrow {\cal R}^{|S\times A(S)|}$ be the operator given by
\begin{equation}
\label{eq:T_a}
T_w(J)(i,a) =
g(i,a) - J(\pi_w) e
+ \sum_{j\in S, b\in A(j)}p_{w}(i,a; j,b) J(j,b),
\end{equation}
or in more compact notation
\[ T_w(J) = G - J(\pi_w) e + P_{w} J, \]
where $G$ is the column vector with components $g(i,a), i\in S, a\in A(i)$, $J(\pi_w)$ is the average cost corresponding to the policy parameter $w$ and $P_{w}$ is the transition probability matrix of the joint (state-action) Markov chain under policy $\pi_w$, with components $p_{w}(i,a; j,b)$. Here $p_{w}(i,a; j,b)$ denote the transition probabilities of the joint process $\{(X_n, Z_n)\}$.
The differential inclusion associated with the $\theta$-recursion of \eqref{eq:tqsa-a-update} corresponds to
\begin{equation}
\label{td-di-a}
\dot{\theta}(t) \in \hat{\Gamma}_\theta\left(h(\theta)\right),
\end{equation}
where $h(\theta)$ is the set-valued map, defined in compact notation as follows:
\[h(\theta) \stackrel{\triangle}{=} \{\Phi^T \gDash (T_{w(\theta)} (\Phi \theta)-\Phi \theta\mid w \in K^\epsilon_\theta\}.\]
In the above, $\F$ denotes
the diagonal matrix with elements along the diagonal being $f_{w}(i,a)$, $i\in S$, $a\in A(i)$.
Also, $\Phi$ denotes the matrix with rows $\sigma_{s,a}^T,~ s \in S, a \in A(s)$. The number of
rows of this matrix is thus $|S \times A(S)|$, while the number of columns is $N$.
Thus,
$\Phi = (\Phi(i), i=1,\ldots, N)$ where $\Phi(i)$ is the column vector
\[\Phi(i) = (\sigma_{s,a}(i), s \in S, a \in A(s))^T, ~ i=1,\ldots,N.\]
Further, the projection operator $\hat\Gamma_\theta$ is defined as
\[\hat\Gamma_\theta \stackrel{\triangle}{=} \cap_{\epsilon>0} ch\left( \cup_{\parallel \beta-\theta \parallel < \epsilon} \{ \gamma_1(\beta;y+Y) \mid y \in h(\beta), Y \in R(\beta) \} \right),\text{ where}\]

\begin{itemize}[$\bullet$]
\item $ch(S)$ denotes the closed convex hull of the set $S$;
\item $\gamma_1(\theta;y)$ denotes the directional derivative of $\Gamma_1$ at $\theta$ in the direction $y$ and is defined by
$$\gamma_1(\theta;y) \stackrel{\triangle}{=} \lim_{\eta\downarrow0} \left(\dfrac{\Gamma_1(\theta_n + \eta y)-\theta}{\eta} \right);$$
 \item $Y(n+1)$ is defined as follows:
\begin{align*}
Y(n+1) &\stackrel{\triangle}{=} 
\left(g(X_n,Z_n) - J(\pi_{w_n}) +
\theta_n^T\sigma_{X_{n+1},Z_{n+1}} - \theta_n^T\sigma_{X_n,Z_n}\right)\sigma_{X_n,Z_n}\\
 - &E\left[
\left(g(X_n,Z_n) - J(\pi_{w_n}) +
\theta_n^T\sigma_{X_{n+1},Z_{n+1}} - \theta_n^T\sigma_{X_n,Z_n}\right)\sigma_{X_n,Z_n} \mid {\cal G}(n)
\right],
\end{align*}
where ${\cal G}(n)$ $= \sigma (\theta_r, X_r, Z_r, r\leq n),n\geq 0$ is a sequence
of associated sigma fields; and
\item $R(\beta)$ denotes the compact support of the conditional distribution of $Y(n+1)$ given ${\cal G}(n)$.
\end{itemize}

The main result is then given as follows:
\begin{theorem}
\label{prop1}
The iterate $\theta_n$, $n\geq 0$ governed by (\ref{eq:tqsa-a-update}), converges a.s to a closed connected internally chain transitive invariant set of \eqref{td-di-a}.
\end{theorem}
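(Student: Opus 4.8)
The plan is to analyze the slow $\theta$-recursion \eqref{eq:tqsa-a-update} as a \emph{projected stochastic recursive inclusion} driven by Markov noise, and to invoke the convergence theory for such schemes \cite[Chapter 5]{borkar2008stochastic} together with the Kushner--Clark treatment of the projection. By Assumption \ref{a2}, since $b(n)/a(n)\to 0$, the $\theta$-iteration lives on the slowest timescale, so from its viewpoint the faster iterates $w_n$ and $\hat J_n$ appear equilibrated. Concretely, Theorem \ref{thm:main} confines $w_n$ asymptotically to $K^\epsilon_\theta$, while the averaging recursion \eqref{eq:tqsa-avg} forces $\hat J_{n+1}\to J(\pi_{w_n})$. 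Because the limiting policy parameter is known only to lie \emph{somewhere} in $K^\epsilon_\theta$ rather than at a single point, the effective drift becomes the set-valued map $h(\theta)=\{\Phi^{T}\F(T_{w}(\Phi\theta)-\Phi\theta)\mid w\in K^\epsilon_\theta\}$; this is exactly why the limit is the differential inclusion \eqref{td-di-a} rather than an ODE.

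First I would recast \eqref{eq:tqsa-a-update} in the canonical form
\[
\theta_{n+1}=\Gamma_1\!\left(\theta_n+b(n)\bigl(h_n+M_{n+1}+\beta_n\bigr)\right),
\]
isolating three pieces. The term $M_{n+1}=Y(n+1)$ is the martingale difference already written in the statement, namely the per-step update direction minus its conditional mean given $\mathcal{G}(n)$. That conditional mean is the state-dependent quantity $\sigma_{X_n,Z_n}\bigl(T_{w_n}(\Phi\theta_n)-\Phi\theta_n\bigr)(X_n,Z_n)$; since the joint chain $\{(X_n,Z_n)\}$ is ergodic under any fixed policy (by the irreducibility/aperiodicity assumption), a Poisson-equation (controlled Markov noise) decomposition replaces this instantaneous drift by its stationary average $\Phi^{T}\F(T_{w_n}(\Phi\theta_n)-\Phi\theta_n)$, a selection $h_n\in h(\theta_n)$, up to a fluctuation vanishing because $b(n)\to0$. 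The remainder $\beta_n$ collects the timescale-coupling errors --- replacing $\hat J_{n+1}$ by $J(\pi_{w_n})$ and $w_n$ by its limit in $K^\epsilon_\theta$ --- and must be shown to vanish a.s. The projections $\Gamma_1,\Gamma_2$ keep $\theta_n\in D$ and $w_n\in C$, the single-stage cost \eqref{eq:single-stage-cost} keeps $\hat J_n$ bounded, and the features are bounded by construction, so every quantity entering the recursion is uniformly controlled.

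Next I would verify the hypotheses of the inclusion-convergence theorem. First, $h$ is a Marchaud map: its values are compact and convex (the closed convex hull being built into $\hat\Gamma_\theta$ via $ch(\cdot)$), it has linear growth since $\theta,w$ lie in compact sets, and it is upper semicontinuous in $\theta$ --- this rests on the continuity of $w\mapsto(f_w,P_w,J(\pi_w))$, which follows from Assumption \ref{a1} and the smooth dependence of the stationary quantities on $w$ under irreducibility/aperiodicity, together with upper semicontinuity of $\theta\mapsto K^\epsilon_\theta$. Second, the martingale noise obeys $E[\|M_{n+1}\|^2\mid\mathcal{G}(n)]\le \kappa(1+\|\theta_n\|^2)$ for some $\kappa>0$ by the boundedness of costs, features and parameters, so with $\sum_n b(n)^2<\infty$ the martingale $\sum_n b(n)M_{n+1}$ converges a.s. Third, the projection $\Gamma_1$ contributes in the limit the directional derivative $\gamma_1(\cdot;\cdot)$ and hence the projected field $\hat\Gamma_\theta$ by the Kushner--Clark argument adapted to set-valued maps, with the support $R(\beta)$ of the conditional law of $Y(n+1)$ entering precisely here. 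Granting these, the interpolated trajectory of $\{\theta_n\}$ is a.s.\ an asymptotic pseudotrajectory of the flow of \eqref{td-di-a}, and the convergence theorem for projected stochastic recursive inclusions yields convergence of $\theta_n$ to a closed connected internally chain transitive invariant set of \eqref{td-di-a}, which is Theorem \ref{prop1}.

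The hard part will be the joint control of the timescale coupling and the Markov-noise averaging, i.e.\ establishing $\beta_n\to0$ a.s. One must invoke Theorem \ref{thm:main} to pin $w_n$ into $K^\epsilon_\theta$ \emph{and} account for the fact that the stationary distribution $f_w$ --- hence the drift $\Phi^{T}\F(T_w(\Phi\theta)-\Phi\theta)$ --- genuinely varies with $w$ across $K^\epsilon_\theta$; it is this variation that blocks a single-valued limit and forces the inclusion. Controlling the ergodic-averaging error uniformly while $\theta_n$ and $w_n$ drift, and establishing the upper semicontinuity of $\theta\mapsto K^\epsilon_\theta$ together with the continuity of the stationary quantities in $w$, are the two most demanding ingredients; the rest reduces to the bounded-cost and bounded-feature estimates and the cited convergence machinery.
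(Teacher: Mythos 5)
Your proposal is correct and follows essentially the same route as the paper's proof: recast the slow $\theta$-recursion as a projected stochastic recursive inclusion, control the martingale noise $Y(n+1)$ via bounded costs/features/parameters and $\sum_n b(n)^2<\infty$, verify the Marchaud conditions for $\hat\Gamma_\theta(h(\theta))$ (the paper cites \cite[Proposition 3]{bhatnagar2012twotimescale} for this), handle the projection through the directional derivative $\gamma_1$, and conclude by the convergence theorem of \cite[Chapter 5]{borkar2008stochastic}. The only cosmetic difference is that you propose an explicit Poisson-equation averaging of the Markov noise, whereas the paper absorbs the state-dependent conditional mean into $z_n$ and the enlarged convex-hull definition of $\hat\Gamma_\theta$, and handles the $\hat J_n\to J(\pi_{w_n})$ coupling in a separate preliminary lemma.
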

The detailed proofs of Theorems \ref{thm:main} and \ref{prop1} are provided in the supplementary material.

\section{Intruder's Mobility Model Estimation}
\label{sec:unknownP}
The algorithms described in the previous sections assume knowledge of the transition dynamics (the matrix $P$) of the Markov chain governing the intruder movement. However, in practice, this information is not available. In this section, we present a procedure to estimate $P$ and combine the same with the sleep-wake scheduling algorithms described in the previous section. We assume that $P$ is stationary, i.e., it does not change with time.

The estimation procedure for $P$ is online and convergent. The combination with the sleep-wake scheduling algorithms happens via multi-timescale stochastic approximation. In essence, we run the estimation procedure for $P$ on the faster timescale while the updates for the parameters of the sleep-wake scheduling algorithms are conducted on the slower timescale.
Thus, the update recursions for the individual sleep-wake algorithms see the estimate for $P$ as equilibrated, i.e., converged.

Let $\hat P_0$ be the initial estimate of the transition probability matrix $P$. Then, the estimate $\hat P_n$ at time instant $n$ is tuned as follows:
\begin{equation}
\label{eqn:P-update}
\hat P_{n + 1} = \Pi\left ( \hat P_n + d(n) \hat{p}_n \hat{p}_{n + 1}^T \right ).
\end{equation}
In the above, $\hat{p}_n = \left [ p_n(i): i = 1, 2, \dots, N + 1 \right ]^T$ is a column vector signifying current location of the intruder. Further, $\Pi(\cdot)$ is a projection operator that ensures that the iterates $\hat P_n$ satisfy the properties of a transition probability matrix. Also,$\{d(n)\}$ is a step-size sequence chosen such that it is on the faster timescale, while the $\theta$-recursion of the algorithm described earlier is on the slower timescale.

The idea behind the above update rule can be explained as follows: 
Suppose the locations of the intruder at instants $n$ and $n + 1$ are known. Then, $\hat{p}_n$ and $\hat{p}_{n + 1}$ would be vectors with the value $1$ in $l_k$th position and $0$ elsewhere. The quantity $\hat{p}_n \hat{p}_{n + 1}^T$ would thus result in a matrix with $1$ at row index $l_k$ and column index $l_{k + 1}$ and $0$ elsewhere. The recursion \eqref{eqn:P-update} then results in a sample averaging behavior (due to stochastic approximation) for estimating the transition dynamics $P$. The same logic can be extended to the remaining cases, for instance, known $l_k$ and unknown $l_{k + 1}$ and so on.

Empirically we observe that the update \eqref{eqn:P-update} converges to the true transition probability matrix $P$ for each of the proposed algorithms, in all the network settings considered.

\section{Discounted Cost Setting}
\label{sec:discounted-objective}
We now describe the discounted cost objective. As in the case of the average cost setting (Section \ref{sec:average-setting}), we describe below the Bellman equation for continuous state-action spaces. However, for the sake of implementation (in later sections), we again use the discrete version of the problem.

For a policy $\pi$, define the value function $V^\pi:S\rightarrow \mathbb{R}$ as follows:
\begin{equation}
V^\pi(x) = E\left[\sum_{m=0}^{\infty} \gamma^m g(s_m,a_m)\mid X_0=x\right],
\label{vf}
\end{equation}
for all $x\in \S$. In the above, $\gamma \in (0,1)$ is a given discount factor.
The aim then is to find an optimal value function $V^*:\S\rightarrow \mathbb{R}$,
i.e.,
\begin{equation}
 V^*(x) = \min_{\pi\in \Pi} V^\pi(x) \stackrel{\triangle}{=} V^{\pi^*}(x),
\label{oe}
\end{equation}
where $\pi^*$ is the optimal policy, i.e., the one for which $V^*$ is the value function.
It is well known, see \cite{puterman},
that  the optimal value function $V^*(\cdot)$ satisfies
the following Bellman equation of optimality in the discounted cost case:
\begin{equation}
V^*(x) = \min_{a\in \A(x)} \left(g(x,a) + \gamma \int p(x,a,dy)V^*(y)\right),
\label{be}
\end{equation}
for all $x\in \S$. As for the average cost, our algorithms in the discounted cost setting do not require knowledge of the system model and incorporate function approximation.

\section{Discounted Cost Algorithms}
\label{sec:discounted-algorithms}
In this section, we present two algorithms for sleep-wake scheduling with the goal of minimizing a discounted cost objective described in Section \ref{sec:discounted-objective}. The overall structure of both the algorithms follow the schema provided in Algorithm \ref{algorithm:structure}. However, in comparison to the average cost algorithms described earlier, the parameter $\theta$ is updated in a different fashion here to cater to the discounted cost objective.  

\subsection{Q-learning based Sleep--wake Scheduling Algorithm (QSA-D)}
\label{sec:qsa-d}
As in the case of the average cost setting, the Q-learning algorithm cannot be used without employing function approximation because of the size of the state-action space. The function approximation variant of Q-learning in the discounted cost setting parameterizes the Q-values in a similar manner as the average cost setting, i.e., according to \eqref{eqn:Q_FA}. The algorithm works with a single online simulation trajectory
of states and actions, and updates $\theta$ according to
\begin{align}
\label{eq:qsa-d-update-rule}
\theta_{n+1} = \theta_n + a(n)\sigma_{s_n,a_n}\left (g(s_n,a_n) + \gamma \min_{b \in A(s_{n+1})}\theta_n^T\sigma_{s_{n+1},b} - \theta_n^T\sigma_{s_n,a_n}\right ),
\end{align}
where $\theta_0 $ is set arbitrarily.
In the above, $s_n$ and $s_{n+1}$
denote the state at instants $n$ and $n+1$, respectively, and $\theta_n$ denotes the $n^{th}$ update of the parameter. 
 In  (\ref{eq:qsa-d-update-rule}), the action $a_n$ is chosen in state $s_n$ according
to an $\epsilon-$greedy policy, as in the case of the QSA-A algorithm.

\subsection{Two-timescale Q-learning based sleep--wake scheduling algorithm (TQSA-D)}
\label{sec:tqsa-d}
As with the average cost setting, the Q-learning algorithm with function approximation in the discounted setting is not guaranteed to converge because of the off-policy problem.
A variant of Q-learning \cite{bhatnagar2012twotimescale} has been recently proposed and has been shown to be convergent. This algorithm uses two-timescale simultaneous perturbation 
stochastic approximation (SPSA) with Hadamard matrix based deterministic perturbation sequences \cite{bhatnagar2003two}.

 The TQSA-D algorithm is a two timescale stochastic approximation algorithm that employs a linear approximation architecture and parameterizes the policy. As in the case of TQSA-A, we assume here that the policy $\pi(s,a)$ is continuously
differentiable in the parameter $\theta$, for any state--action pair $(s,a)$. The function approximation parameter $\theta$ is tuned on the slower timescale in a TD-like fashion, while the policy parameter $w$ is tuned on the faster timescale in the negative gradient descent direction using SPSA. Let $\pi'_n \stackrel{\triangle}{=} \pi_{(w_{n} + \delta\Delta_{n})} = ( \pi_{(w_{n} + \delta\Delta_{n})}(i,a), i \in S, a \in A(i))^T$, where $\delta > 0$ is a given small constant, be the randomized policy parameterized by $(w_{n} + \delta\Delta_{n})$ during the $n$th instant. Here $\Delta_{n}, n\geq0$ are perturbations obtained from the Hadamard matrix based construction described before.
The update rule of the TQSA-D algorithm is given as follows: $\forall n \ge 0$,
\begin{align}
 \label{eq:tqsa-d-update-rule}
 \theta_{n+1} = \quad& \Gamma_{1}\left (\theta_n + b(n)\sigma_{s_n,a_n}\left(r(s_n,a_n) + \gamma \theta_{n}^{T}\sigma_{s_{n+1},a_{n+1}} -\theta_{n}^{T}\sigma_{s_n,a_n}\right) \right ),\nonumber \\ 
  w_{n+1} = \quad& \Gamma_{2} \left ( w_n - a(n)\dfrac{\theta_n^{T}\sigma_{s_n,a_n}}{\delta}\Delta_{n}^{-1} \right ).
\end{align}
The projection operators $\Gamma_1, \Gamma_2$ and the step-sizes $a(n),b(n)$ for all $n\ge0$ are the same as in TQSA-A and the features $\sigma_{s_n,a_n}$ are as in the previous algorithms.

\section{Simulation Setup and Results}
\label{sec:simulations}

\subsection{Implementation}
We implemented our sleep--wake scheduling algorithms - QSA-A and TQSA-A for the average cost setting and QSA-D and TQSA-D for the discounted cost setting, respectively. For the sake of comparison, we also implemented the FCR and \qmdp algorithms proposed in \cite{fuemmeler2008smart}. Note that for each of these algorithms, the knowledge of the mobility model of the intruder is assumed. We briefly recall these algorithms below:\\
\paragraph{FCR.}
This algorithm approximates the state evolution \eqref{pevolve} by
$p_{t+1}=p_{t}P$,
and then attempts to find the sleep time for each sensor by solving the following balance equation:
\begin{small}  
\begin{align*}
 V^{(l)}(p)=\min_{u} \big( \sum_{j=1}^{u} [pP^{j}]_{l} + \sum_{i=1}^{N}c[pP^{u+1}]_{i} + V^{(l)}(pP^{u+1}) \big).
\end{align*}
\end{small}
Thus, the sleeping policy here is obtained locally for each sensor by solving the above Bellman equation for each sensor, with a strong approximation on the state evolution. Note that our algorithms make no such assumptions and attempt to find the optimal sleeping policy in the global sense (i.e., considering all the sensors) and not in the local sense (i.e., treating the sensors individually).\\
\paragraph{\qmdp.}
In this approach, the decomposition into the per sensor problem is the same as in FCR. However here, the underlying assumption is that the location of the object will always be known in the future. Thus, instead of \eqref{pevolve}, the state evolves here according to
$p_{k+1}=e_{l_{k+1}}P$.
The objective function for a sensor $l$, given the state component $p$, is given by
\begin{small}
\begin{align*}
 V^{(l)}(p)=&\min_{u} \big ( \sum_{j=1}^{u} [pP^{j}]_{l} + \sum_{i=1}^{N}c[pP^{u+1}]_{i} + \sum_{i=1}^{N}[pP^{u+1}]_{i}V^{(l)}(e_{i}) \big ).
\end{align*}
\end{small}
The difference between the above and the corresponding equation for FCR is in the third term on the right hand side representing the future cost. In the case of $Q_{MDP}$, the future cost is the conditional expectation of the cost incurred from the object location after $u$ time units given the current distribution as its location. Thus, one can solve $V^{(l)}(p)$ for any $p$ once $V^{(l)}(e_{i}), 1 \leq i \leq N$ are known. The \qmdp algorithm then attempts to find a solution using the well-known dynamic programming procedure - policy iteration for MDPs.
However, an important drawback with the dynamic programming approaches is the curse of dimensionality (i.e., the computational complexity with solving the associated Markov decision process increases exponentially with the dimension and cardinality of the state and action spaces). RL algorithms that incorporate function approximation techniques alleviate this problem and make the computational complexity manageable, while still ensuring that these algorithms converge to a `good enough' policy.

\subsection{Simulation Setup}

We perform our experiments on a 2-D network setting (see Fig. \ref{fig:2d}) of $121$ sensors, i.e., a $11 \times 11$ grid. The sensor regions overlap here, with each sensor's sensing region overlapping with that of its neighboring nodes. In particular, the sensing regions of sensors in the interior of the grid overlap with eight neighboring nodes.

The simulations were conducted for $6000$ cycles for all algorithms.
We set the single-stage cost component $c$ to $0.1$ and the discount factor $\gamma$ to $0.9$.
For QSA-A/D, we set the exploration parameter $\epsilon$ to $0.1$. The projection operators $\Gamma_i, i=1,2$ are chosen such that each co-ordinate of $\theta$ and $w$ is forced to evolve within $[1,100]$.
The step-sizes are chosen as follows:
For QSA-A, we set
${\displaystyle a(n) = \frac{1}{n}, n \ge 1}$ and for TQSA-A, we set
$b(n)= \frac{1}{n}, \quad a(n)= \frac{1}{n^{0.55}}, n \ge 1$, respectively.
Further, for TQSA-A/TQSA-D, we set $\delta = 0.001$. 
For QSA-A, we choose the fixed state $s$ (see \eqref{eq:qsa-a-update-rule}) as $\langle p_{0},r \rangle$ where $p_{0}$ is the initial distribution of $p_k$ and $r$ is a random sleep time vector. It is easy to see that this choice ensures that there is a positive probability of the underlying MDP visiting state $s$\footnote{This is because the intruder stays in the starting location for at least one time step and the exploration of actions initially results in a positive probability of a random action being chosen.}. 

\subsection{Results}
We use the number of sensors awake and the number of detects per time step as the performance metrics for comparing the various sleep/wake algorithms. While the former metric is the ratio of the total number of sensors in the wake state to the number of time-steps, the latter is the ratio of the number of successful detects of the intruder to the number of time-steps.
Fig. \ref{average_tradeoff} presents the number of sensors awake and the number of detects per time step, for each of the algorithms studied in the average cost setting, while Fig. \ref{discounted_tradeoff} presents similar results for the algorithms in the discounted cost setting.

\begin{figure}
\centering
\begin{tabular}{c}
\subfloat[Number of detects per time step]{\hspace{-2em}\includegraphics[width=1.7in,height=2.4in,angle=270]{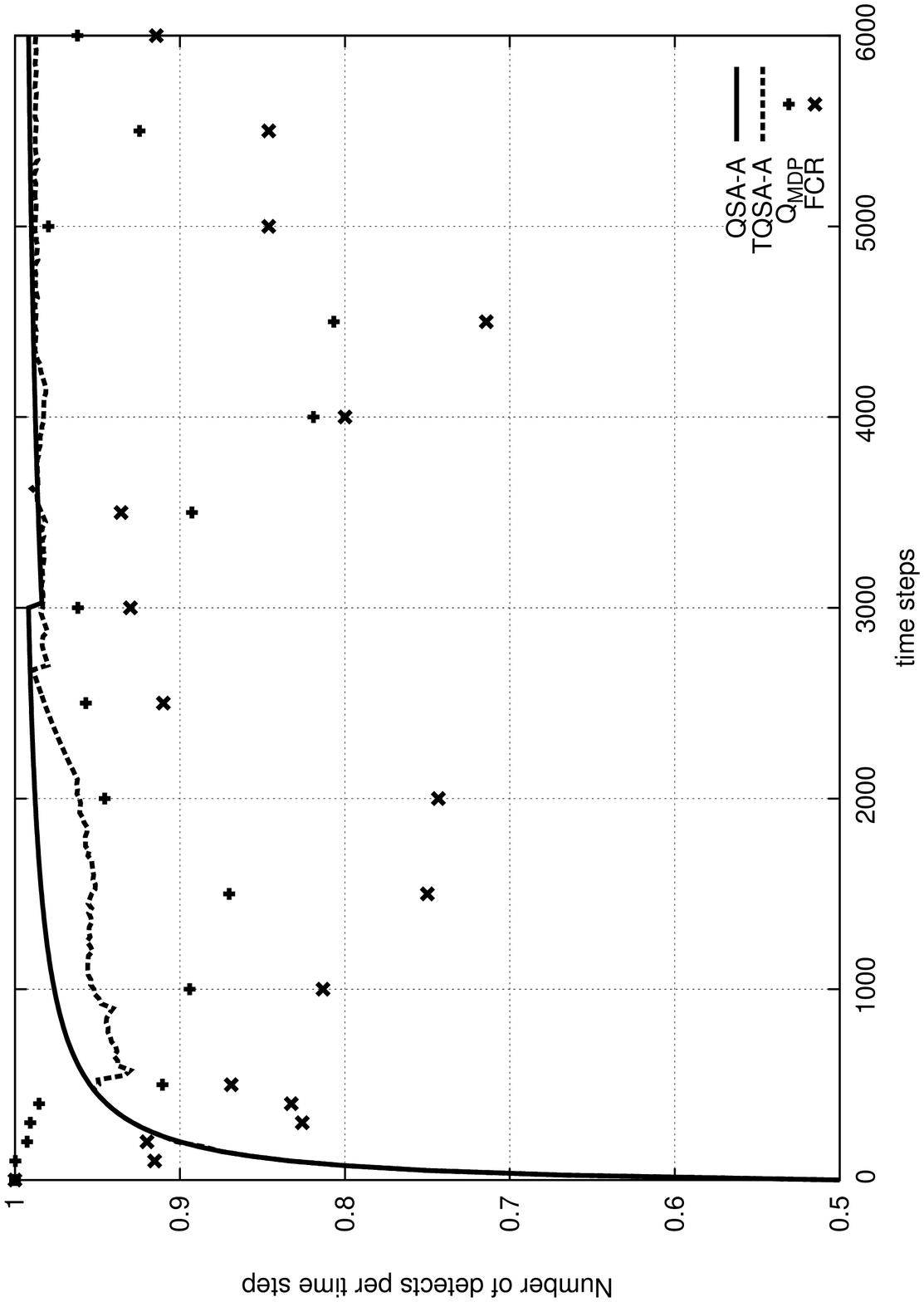}}
\subfloat[Number of sensors awake per time step]{\hspace{1em}\includegraphics[width=1.7in,height=2.4in,angle=270]{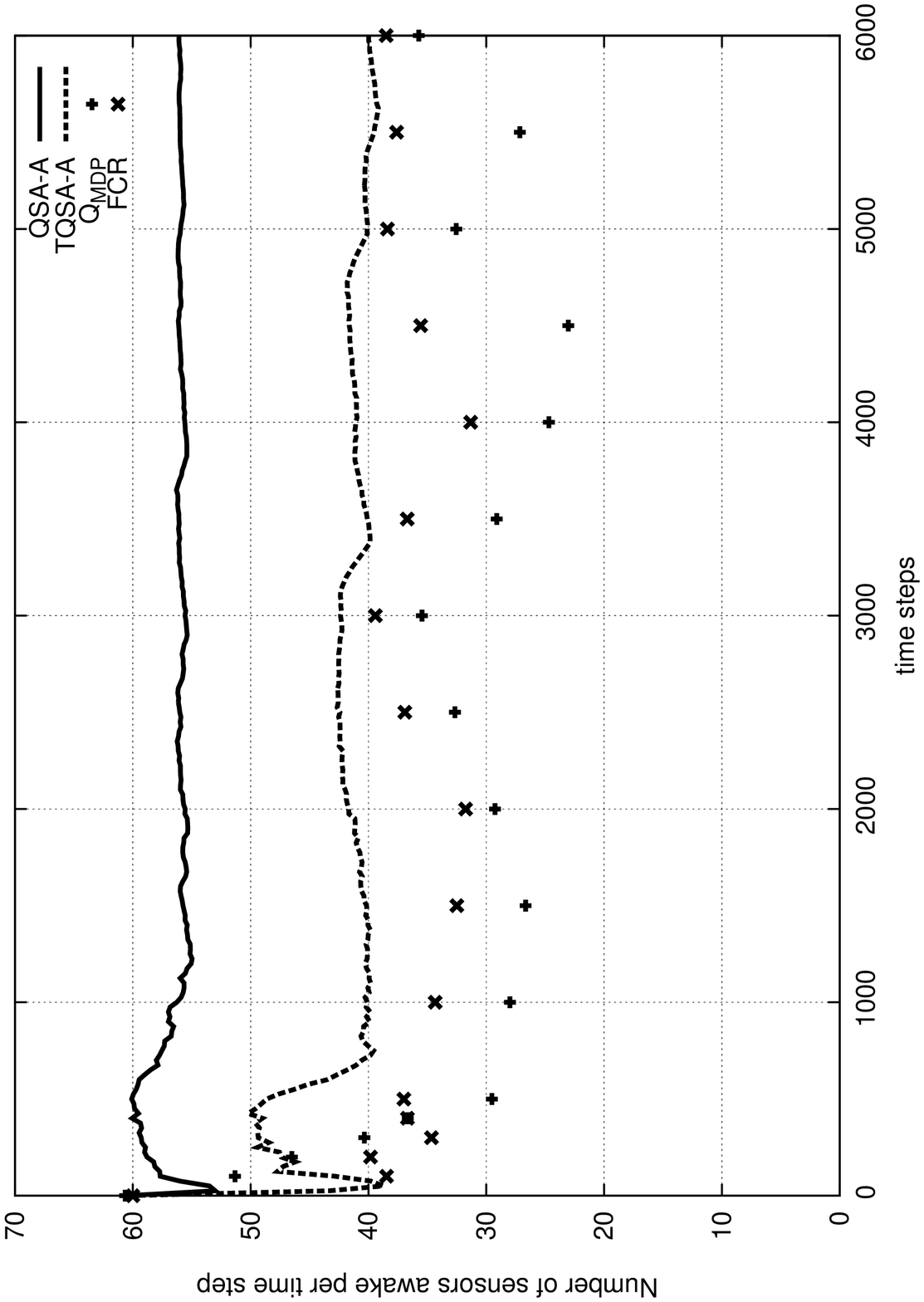}}
\end{tabular}
\caption{Tradeoff characteristics - known mobility model ($P$) case}
\label{average_tradeoff}
\end{figure}

Fig. \ref{a_qwfa_theta_values} presents the evolution of the Q-value parameter $\theta$ for TQSA-A in the average cost setting. 
Fig. \ref{a_2d_tradeoff_up} presents the results obtained from the experiments with TQSA-A combined with the mobility model estimation procedure \eqref{eqn:P-update}. Fig. \ref{a_P_values} shows the evolution of the estimate $P_{k}(i,j)$ of the intruder's mobility model, where $i$ corresponding to the $(6,6)$th cell and $j$ corresponding to $(6,5)$th cell, converges. In contrast, the \qmdp algorithm requires full knowledge of the distribution of the intruder movement and hence, cannot be applied in the setting of unknown $P$. 
 
\subsection{Discussion}
We observe that in comparison to the \qmdp algorithm, our algorithms attain a slightly higher tracking accuracy at the cost of a few additional sensors in the wake state. On the other hand, our algorithms exhibit better tradeoff between energy cost and tracking accuracy in comparison to the FCR algorithm. Amongst our algorithms, we observe that the two timescale variant TQSA-A  performs better than the Q-learning based QSA-A, since TQSA-A results in a tracking accuracy similar to QSA-A with lesser number of sensors awake. A similar observation holds in the discounted cost setting as well. 

Further, as evident in the tradeoff plot in Fig. \ref{average_tradeoff}, the \qmdp algorithm exhibits fluctuating behaviour with a significant number of outliers that show poor tradeoffs. This, we suspect, is due to the underlying requirement of complete future observations in \qmdp. Further, \qmdp (and even FCR) is not a learning algorithm that stabilizes the number of sensors awake and the tracking errors in the long-term. This is because, at each instant, \qmdp attempts to solve the  Bellman equation  in an approximate fashion and no information about the solution thus obtained is  carried forward to the future instants. 

On the contrary, our algorithms learn a good enough sleep/wake scheduling policy for the individual sensors with contextual information being carried forward from one time step to the next. This results in a stable regime for the number of sensors awake and the tracking accuracy, unlike \qmdp. 
While the number of sensors awake for the FCR algorithm is less than that for our algorithms, the tracking accuracy is significantly lower in comparison.  For critical tracking systems, where failing to track has higher penalty, our proposed algorithms (esp. TQSA-A) will be able to achieve greater performance (tracking accuracy) at the cost of only a few additional sensors in the wake state.

Further, it is evident from Fig. \ref{a_qwfa_theta_values} that the Q-value parameter $\theta$ of TQSA-A converges. This is a significant feature of the TQSA-A algorithm as it possesses theoretical convergence guarantees, unlike QSA-A, which may not converge in some settings. Moreover, it can also be seen that the transient period when the policy parameter $\theta$ has not converged, is short. It is worth noting here that providing theoretical rate of convergence results for TQSA-A is difficult. This is because rate results for multi-timescale stochastic approximation algorithms, except for those with linear recursions (see \cite{konda2004convergence}), is not known till date to the best of our knowledge.

We also observe that even for the case when the intruder's mobility model is not known, TQSA-A  shows performance on par with the vanilla TQSA-A, which assumes knowledge of $P$. 
We also observe that in the TQSA-A algorithm, the estimate $P_k$ of the transition probability matrix $P$ converges to the true $P$ and this is illustrated by the convergence plots in Fig. \ref{a_P_values}. 

\begin{figure}
\centering
    \begin{tabular}{c}
\subfloat[Number of sensors awake per time step]{\hspace{-2em}\includegraphics[width=1.7in,height=2.4in,angle=270]{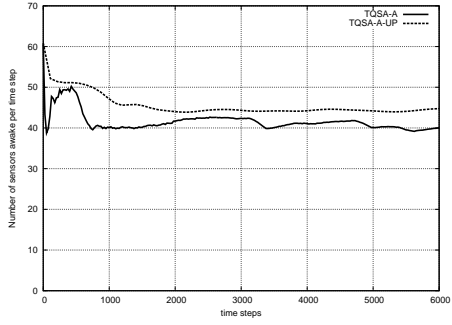}}
\subfloat[Number of detects per time step]{\hspace{1em}\includegraphics[width=1.7in,height=2.4in,angle=270]{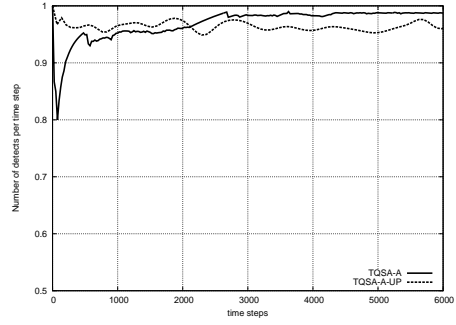}}
\end{tabular}
\caption{Tradeoff characteristics for TQSA-A algorithm with known and unknown $P$}
\label{a_2d_tradeoff_up}
\end{figure}

\begin{figure}
\centering
    \begin{tabular}{c}
\subfloat[Convergence of $\theta$]{\hspace{-2em}\includegraphics[width=1.7in,height=2.4in,angle=270]{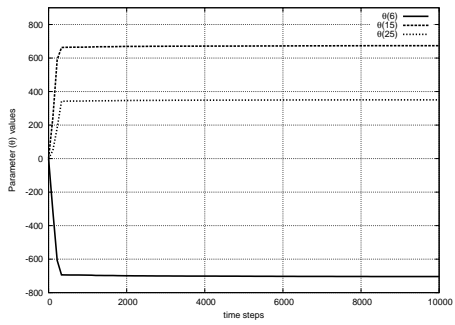}\label{a_qwfa_theta_values}}
\subfloat[Convergence of $P_k$]{\hspace{1em}\includegraphics[width=1.7in,height=2.4in,angle=270]{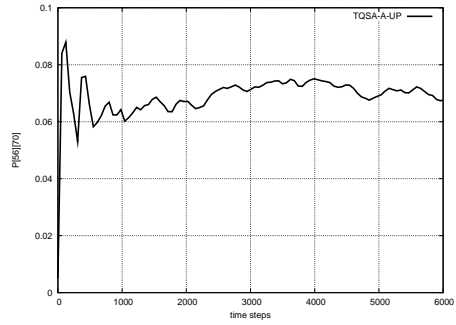}\label{a_P_values}}
\end{tabular}
\caption{Convergence trends of $\theta$ in TQSA-A with known $P$ and the estimate $P_k$ in TQSA-A with unknown $P$ }
\label{conv_trends}
\vspace{-4ex}
\end{figure}

\begin{figure}
\centering
\begin{tabular}{c}
\subfloat[Number of detects per time step]{\hspace{-2em}\includegraphics[width=1.7in,height=2.4in,angle=270]{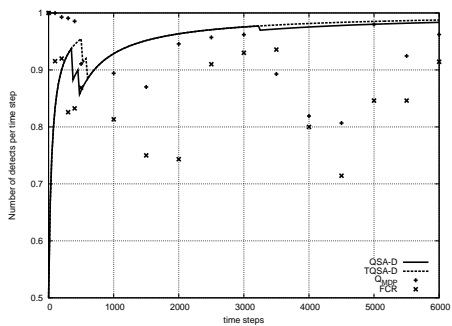}}
\subfloat[Number of sensors awake per time step]{\hspace{1em}\includegraphics[width=1.7in,height=2.4in,angle=270]{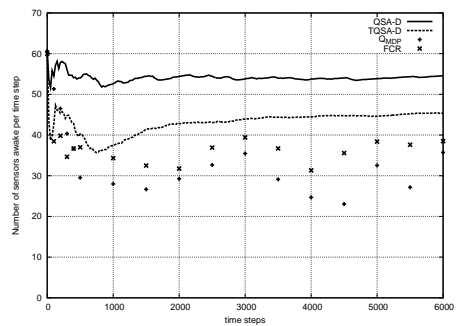}}
\end{tabular}
\caption{Tradeoff characteristics in the discounted setting}
\label{discounted_tradeoff}
\end{figure}

\section{Conclusions and Future Work}
\label{sec:conclusions}
We studied the problem of optimizing sleep times in a sensor network for intrusion detection. Following a POMDP formulation similar to the one in \cite{fuemmeler2011sleep}, our aim in this paper was to minimize certain long-run average and discounted cost objectives. This in turn allowed us to study both transient as well as steady state system behavior. For both the settings considered, we proposed a novel two-timescale Q-learning algorithm with theoretical convergence guarantees. For the sake of comparison, we also developed sleep-scheduling algorithms that are function approximation analogues of the well-known Q-learning algorithm. Next, we extended these algorithms to a setting where the intruder's mobility model is not known. Empirically, we demonstrated the usefulness of our algorithms on a simple two-dimensional network setting.

As future work, one could extend these algorithms to settings where multiple intruders have to be detected. This would involve the conflicting objectives of keeping less number of sensors awake and at the same time, detecting as many intruders as possible. Another interesting direction of future research is to develop intruder detection algorithms in a decentralized setting, i.e., a setting where the individual sensors collaborate in the absence of a central controller. Decentralized variants of our two-timescale Q-learning algorithm TQSA-A can be developed in the following manner: Each sensor runs TQSA-A to decide on the sleep times in a manner similar to the algorithms we propose. However, this would require the knowledge of $p_k$ (distribution of the intruder's location) at each sensor and this can be obtained by means of a message passing scheme between the individual sensors. Since exchanging messages between every pair of sensors may increase the load on the network, a practical alternative is to 
form (possibly dynamic) groups of sensors, within which the message regarding the intruder's location (or $p_k$) is exchanged. The individual sensors then decide on the sleep times using this local information and an update rule similar to \eqref{eq:tqsa-a-update}.

\appendix
\section{Convergence analysis for TQSA-A}
\label{sec:convergence-tqsa-a}

 The ODE approach is adopted for analyzing the convergence of $\theta$ and $w$ recursions \eqref{eq:tqsa-a-update} in the main paper. In essence, the two-timescale stochastic approximation architecture employed in the TQSA-A algorithm allows\\
\begin{inparaenum}[(i)] \item the faster timescale analysis of the $w$-recursion in \eqref{eq:tqsa-a-update} in the main paper assuming that the slower $\theta$-recursion is constant (quasi-static), and\\
    \item the slower timescale analysis of the $\theta$-recursion in \eqref{eq:tqsa-a-update} in the main paper assuming that the faster $w$-recursion has converged, for any given $\theta$.
\end{inparaenum}

The convergence analysis comprises of the following important steps:\\
\begin{inparaenum}[(i)]
    \item Theorem \ref{appendix:thm:main}, in effect, states that the $w$-recursion performs a gradient descent using one-simulation SPSA and converges to a set of points in the neighborhood of the local minimum of the approximate Q-value function $R(\theta, w)$ (defined below). Note that this analysis is for the $w$-recursion on the faster timescale, assuming the Q-value function parameter $\theta$ to be a constant.\\
    \item Analyzing the $\theta$-recursion on the slower timescale, Theorem \ref{appendix:prop1} claims that the iterate $\theta$ asymptotically converges to a closed connected internally chain transitive invariant set associated with a corresponding differential inclusion (DI).
\end{inparaenum}

\subsection{Analysis of the $w$-recursion}
We present below the precise statements of these results.
Let ${\cal C}(C) ({\cal C}(D))$ denote the space of all continuous functions
from $C$  to ${\cal R}^N$ ($D$  to ${\cal R}^d$).
We define the operator $\hat{\Gamma}_2: {\cal C}(C) \rightarrow {\cal C}({\cal R}^N)$ as follows:
\begin{align*}
\hat{\Gamma}_2(v(w)) = &\lim_{\alpha \downarrow 0} \left(
\frac{\Gamma_2(w+\alpha v(w)) - w}{\alpha}\right).
\end{align*}
Consider the ODE associated with the $w$-recursion on the faster timescale, assuming $\theta(t)\equiv \theta$ (a constant independent of $t$):
\begin{equation}
\label{appendix:ode:d1}
\dot{w}(t) = \hat{\Gamma}_2\left(-\nabla_w R(\theta,w(t)) \right),
\end{equation}
with
\[R(\theta,w) \stackrel{\triangle}{=} \sum_{i\in S, a\in A(i)} f_{w}(i,a)\theta^T\sigma_{i,a},\]
where $f_{w}(i,a)$ are the stationary probabilities
$f_{w}(i,a)=d^{\pi_{w}}(i)\pi_{w}(i,a)$, $i\in S$, $a\in A(i)$ for the joint process $\{(X_n, Z_n)\}$, obtained from the state-action tuples at each instant. Here $d^{\pi_{w}}(i)$ is the stationary probability distribution for the Markov chain $\{X_n\}$ under policy $\pi_w$ being in state $i\in S$. The ergodicity of the joint process $\{(X_n, Z_n)\}$ and the existence of stationary distribution $f_{w}(i,a)$ follows from the proposition below:

\begin{proposition}
\label{lem1}
Under (A1) and (A2), the
process $(X_n, Z_n), n\geq 0$ with $Z_n$, $n\geq 0$ obtained
from the SRP $\pi_w$, for any given $w\in C$,
is an ergodic Markov process.
\end{proposition}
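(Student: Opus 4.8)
The plan is to first verify the Markov property of the joint process and then lift the ergodicity of the state chain $\{X_n\}$ to the state--action chain $\{(X_n,Z_n)\}$. For the Markov property, I would compute the one-step conditional law directly. Since $X_{n+1}$ is drawn from the MDP kernel $p(X_n,Z_n,\cdot)$ using only the current state--action pair, and the next action $Z_{n+1}$ is drawn from the stationary randomized policy $\pi_w(X_{n+1},\cdot)$ depending only on $X_{n+1}$, the remote past is irrelevant given $(X_n,Z_n)$. This yields the factored transition kernel
\begin{equation*}
p_w(x,a;y,b) = p(x,a,y)\,\pi_w(y,b),
\end{equation*}
so $\{(X_n,Z_n)\}$ is a time-homogeneous Markov chain on the finite grid $\{(i,a): i\in S,\, a\in A(i)\}$ (recall the relevant state is the discretized sufficient statistic $(p_k,r_k)$, so the chain is genuinely finite).

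For ergodicity, since the grid is finite it suffices to establish irreducibility and aperiodicity; positive recurrence then follows automatically. I would use two ingredients: the irreducibility and aperiodicity of the state chain $\{X_n\}$ induced by $\pi_w$, and the strict positivity $\pi_w(i,a)>0$ for every feasible pair, which holds for the Boltzmann family \eqref{eq:pi} used to parameterize the policy. Writing $\bar P_w(x,y) = \sum_{a\in A(x)} \pi_w(x,a)\, p(x,a,y)$ for the one-step transition matrix of $\{X_n\}$, induction on the factored kernel gives the key identity
\begin{equation*}
P_w^{n}(x,a;y,b) = \Big(\textstyle\sum_{x_1} p(x,a,x_1)\, \bar P_w^{\,n-1}(x_1,y)\Big)\,\pi_w(y,b), \qquad n\ge 1.
\end{equation*}

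From this identity both properties follow. For irreducibility, given feasible $(x,a)$ and $(y,b)$, the first transition under action $a$ reaches some $x_1$ with $p(x,a,x_1)>0$, irreducibility of $\{X_n\}$ lets $x_1$ reach $y$ along a path whose intermediate actions are attainable (positivity of $\pi_w$), and the final factor $\pi_w(y,b)>0$ makes the target action attainable; hence every feasible pair communicates. For aperiodicity, since $\{X_n\}$ is finite, irreducible and aperiodic, there is $N_0$ with $\bar P_w^{\,m}(x_1,y)>0$ for all $m\ge N_0$ and all $x_1,y$; setting $(y,b)=(x,a)$ in the identity and using $\pi_w(x,a)>0$ shows $P_w^{\,n}(x,a;x,a)>0$ for all $n\ge N_0+1$, so the set of return times has greatest common divisor one. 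Finally, I would confirm that $f_w(i,a)=d^{\pi_w}(i)\pi_w(i,a)$ is the (necessarily unique) stationary distribution by substituting into $\sum_{(x,a)} f_w(x,a)\,p_w(x,a;y,b)$ and invoking stationarity of $d^{\pi_w}$ under $\bar P_w$.

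The main obstacle I anticipate is precisely this lifting step: the communication and periodicity structure of the product-type joint chain need not coincide with that of the marginal state chain, and a naive argument can fail. The factored identity above is what tames this, reducing both irreducibility and aperiodicity of $\{(X_n,Z_n)\}$ to the corresponding properties of $\{X_n\}$ together with the positivity of $\pi_w$; everything else is bookkeeping made trivial by finiteness of the discretized state space.
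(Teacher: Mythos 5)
Your argument is correct, but it is worth noting that the paper does not actually prove this proposition in-house: its entire ``proof'' is a pointer to \cite[Proposition~1, Section~3]{bhatnagar2012twotimescale}. What you have written is therefore a self-contained reconstruction of the lifting argument that the cited reference carries out. Your key identity
\[
P_w^{n}(x,a;y,b) \;=\; \Bigl(\textstyle\sum_{x_1} p(x,a,x_1)\, \bar P_w^{\,n-1}(x_1,y)\Bigr)\,\pi_w(y,b)
\]
checks out by induction (with $\bar P_w^0 = I$ handling the base case), and it does correctly reduce irreducibility and aperiodicity of the joint state--action chain to the corresponding properties of $\{X_n\}$ under $\pi_w$ plus strict positivity of the policy; the verification that $f_w(i,a)=d^{\pi_w}(i)\pi_w(i,a)$ is stationary is also right. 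Two small caveats. First, the irreducibility and aperiodicity of the induced state chain is the paper's separate (third, unlabelled) assumption, not a consequence of Assumptions~\ref{a1} and~\ref{a2}; the ``(A1) and (A2)'' in the proposition statement refers to the assumption numbering of the cited reference, so you should be explicit that you are invoking that ergodicity hypothesis. Second, your positivity claim $\pi_w(i,a)>0$ is guaranteed for the Boltzmann family \eqref{eq:pi} that the paper actually uses, but it does not follow from Assumption~\ref{a1} alone (continuous differentiability permits policies that put zero mass on some actions); if positivity failed you would have to restrict the joint chain to the attainable pairs $\{(i,a):\pi_w(i,a)>0\}$ before the argument goes through. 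Neither caveat is a gap in the setting the paper works in, and your proof buys a transparent, elementary derivation where the paper offers only a citation.
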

\begin{proof}
    See \cite[Proposition 1, Section 3]{bhatnagar2012twotimescale}.
\end{proof}

\begin{lemma}
\label{lemma1}
Under (A1) and (A2), the stationary probabilities
$f_w(i,a)$, $i\in S$, $a\in A(i)$ are continuously
differentiable in the parameter $w\in C$.
\end{lemma}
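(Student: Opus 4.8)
The plan is to reduce the claim to the continuous differentiability of the stationary distribution $d^{\pi_w}$ of the state process $\{X_n\}$, and then to obtain the latter from elementary perturbation theory for finite Markov chains. Since $f_w(i,a) = d^{\pi_w}(i)\,\pi_w(i,a)$ is a product, and $\pi_w(i,a)$ is continuously differentiable in $w$ by Assumption (A1) (indeed, the Boltzmann form in \eqref{eq:pi} is $C^\infty$ in $w$), it suffices to show that each $d^{\pi_w}(i)$ is continuously differentiable in $w$; the product of two $C^1$ functions is then $C^1$.

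First I would write the transition probability matrix of $\{X_n\}$ under $\pi_w$ as $P^{\pi_w}$, with entries $P^{\pi_w}_{ij} = \sum_{a \in A(i)} \pi_w(i,a)\, p(i,a,j)$. Because the kernel $p(i,a,j)$ does not depend on $w$ and $\pi_w(i,a)$ is $C^1$ in $w$, every entry of $P^{\pi_w}$ is continuously differentiable in $w$. By Proposition \ref{lem1} the chain is ergodic for every $w \in C$, so $d^{\pi_w}$ is the \emph{unique} probability vector satisfying the balance equations $d^{\pi_w}(P^{\pi_w} - I) = 0$ together with the normalisation $\sum_i d^{\pi_w}(i) = 1$.

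The core step is to exhibit $d^{\pi_w}$ as a $C^1$ function of $w$. The $N$ balance equations have rank $N-1$ under irreducibility, so I would discard one of them and append the normalisation row, obtaining a square system $A(w)\, d = b$ with $b = (0,\dots,0,1)^\top$, where $A(w)$ collects the remaining balance rows and the all-ones normalisation row. Irreducibility guarantees that $A(w)$ is nonsingular for every $w \in C$, whence $d^{\pi_w} = A(w)^{-1} b$. Matrix inversion is a smooth map on the open set of nonsingular matrices, and $w \mapsto A(w)$ is $C^1$ (its entries being those of $P^{\pi_w}$ plus constants); by the chain rule $d^{\pi_w}$ is therefore continuously differentiable on $C$. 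Equivalently, I could invoke the implicit function theorem on the map $F(d,w) = \big(d(P^{\pi_w}-I),\ \sum_i d(i)-1\big)$, whose Jacobian in $d$ coincides with the nonsingular matrix $A(w)$; either route yields the $C^1$ dependence, and in fact reproduces Schweitzer's perturbation formula $\partial_{w_k} d^{\pi_w} = d^{\pi_w}\,(\partial_{w_k} P^{\pi_w})\, Z$ with $Z$ the deviation (fundamental) matrix.

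The main obstacle, and the only place where more than routine calculus enters, is the nonsingularity of $A(w)$: this is exactly where the ergodicity supplied by Proposition \ref{lem1} is needed. I would argue it by noting that, for an irreducible chain, $I - P^{\pi_w}$ has a one-dimensional null space (spanned by $\mathbf{1}$ on the right and by $d^{\pi_w}$ on the left), so that deleting one redundant balance equation and appending the normalisation row restores full rank. Compactness of $C$ is not needed for pointwise $C^1$-ness, but it would yield a uniform bound on $\|A(w)^{-1}\|$ and hence uniform continuity of the derivatives, should that be required later in the two-timescale analysis. Finally, combining the differentiability of $d^{\pi_w}$ with that of $\pi_w$ from (A1) gives the continuous differentiability of $f_w(i,a) = d^{\pi_w}(i)\,\pi_w(i,a)$, completing the argument.
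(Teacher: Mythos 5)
Your argument is correct. The paper itself does not prove Lemma \ref{lemma1} --- it simply defers to \cite[Lemma 1, Section 3]{bhatnagar2012twotimescale} --- and your proof is the standard perturbation argument used there and throughout the policy-gradient literature: write $f_w(i,a)=d^{\pi_w}(i)\pi_w(i,a)$, note $\pi_w$ is $C^1$ by (A1), and obtain $C^1$-dependence of $d^{\pi_w}$ by expressing it as the solution of the nonsingular linear system formed from the balance equations (rank $N-1$ under the irreducibility guaranteed by Proposition \ref{lem1}) augmented with the normalisation row, whose coefficients are $C^1$ in $w$. So your proposal matches the intended proof in substance and correctly supplies the details the paper omits.
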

\begin{proof}
    See \cite[Lemma 1, Section 3]{bhatnagar2012twotimescale}.
\end{proof}

Let $K_\theta$ denote the set of asymptotically stable equilibria
of (\ref{appendix:ode:d1}), i.e., the local minima of the function
$R(\theta,\cdot)$ within the constraint set $C$.
Given $\epsilon >0$, let $K_\theta^\epsilon$ denote the closed $\epsilon$-neighborhood of $K_\theta$, i.e.,
\[ K_\theta^\epsilon = \{w\in C\mid \parallel w-w_0\parallel \le\epsilon, w_0 \in K_\theta\}.\]

The following result establishes that the $w$-recursion tracks the ODE \eqref{appendix:ode:d1}.

\begin{theorem}
\label{appendix:thm:main}
Let $\theta_n \equiv \theta, \forall n$, for some $\theta\in D\subset {\cal R}^d$. Then,
given $\epsilon > 0$, there exists $\delta_0 > 0$
such that for all $\delta \in (0,\delta_0]$,
$\{w_n\}$ governed by (\ref{eq:tqsa-a-update}) in the main paper
converges to the set $K_\theta^{\epsilon}$ almost surely.
\end{theorem}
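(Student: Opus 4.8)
The plan is to invoke the ODE method for stochastic approximation on the faster timescale, treating the slow iterate as the fixed constant $\theta$ (legitimate here both by the theorem's hypothesis $\theta_n \equiv \theta$ and, more generally, by the timescale separation $b(n)/a(n)\to 0$ of Assumption \ref{a2}). Under this hypothesis the $w$-recursion \eqref{eq:tqsa-a-w-update} is a self-contained single-timescale projected recursion with step-size $a(n)$, and the goal is to show it asymptotically tracks the projected ODE \eqref{appendix:ode:d1}, $\dot{w}(t) = \hat{\Gamma}_2(-\nabla_w R(\theta,w(t)))$, up to an $O(\delta)$ bias, and then to choose $\delta$ small enough that the limiting behaviour is trapped inside the $\epsilon$-neighbourhood $K_\theta^\epsilon$. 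First I would write the update direction $\tfrac{\theta^T\sigma_{s_n,a_n}}{\delta}\Delta_n^{-1}$ as the sum of its conditional mean given the past, a martingale-difference noise term $M_{n+1}$, and a Markov-noise term $\eta_n$, putting the recursion into the canonical form $w_{n+1} = \Gamma_2\big(w_n - a(n)(\bar g_\delta(w_n) + M_{n+1} + \eta_n)\big)$.

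Next I would identify the mean field $\bar g_\delta$. The key observation is that $R(\theta,w)=\sum_{i\in S,\,a\in A(i)} f_w(i,a)\,\theta^T\sigma_{i,a}$ is exactly the stationary expectation of the scalar $\theta^T\sigma_{s,a}$ under the perturbed policy, so the ergodic average of the simulated samples $\theta^T\sigma_{s_n,a_n}$ equals $R(\theta,w_n+\delta\Delta_n)$. A Taylor expansion then gives
\begin{equation*}
\frac{\theta^T\sigma_{s_n,a_n}}{\delta}\Delta_n^{-1} \;\approx\; \frac{R(\theta,w_n)}{\delta}\Delta_n^{-1} + \big(\Delta_n^{-1}\Delta_n^T\big)\nabla_w R(\theta,w_n) + O(\delta),
\end{equation*}
where $\Delta_n^{-1}\Delta_n^T$ is the matrix with $(i,j)$th entry $\Delta_n(i)^{-1}\Delta_n(j)$. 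Here I would invoke Lemma \ref{lemma1} to guarantee that $R(\theta,\cdot)$ is continuously differentiable in $w$, which validates the expansion. The heart of the SPSA argument is that the deterministic Hadamard-matrix-based perturbations of \cite{bhatnagar2003two} are constructed so that, averaged over one cycle of the construction, the offending $\tfrac1\delta$-term $\tfrac{R(\theta,w)}{\delta}\Delta_n^{-1}$ cancels (each coordinate of $\Delta_n$ is $\pm1$ and balanced over a cycle) while $\Delta_n^{-1}\Delta_n^T$ averages to the identity (by Hadamard orthogonality). Hence the averaged update direction is $\nabla_w R(\theta,w)+O(\delta)$, i.e. $\bar g_\delta(w)=\nabla_w R(\theta,w)+b_\delta(w)$ with $\sup_{w\in C}\|b_\delta(w)\|=O(\delta)$.

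Then I would dispose of the noise. Since all quantities live on the finite grid, the features $\sigma_{s,a}$ are uniformly bounded, and together with the compact projection sets $C,D$ this renders the iterates and both noise terms uniformly bounded for each fixed $\delta$. Using the ergodicity of the joint chain $\{(X_n,Z_n)\}$ from Proposition \ref{lem1}, I would solve the associated Poisson equation to show that the Markov-noise contribution $\eta_n$ is asymptotically negligible, and that $\sum_n a(n)M_{n+1}$ converges almost surely by the martingale convergence theorem, using $\sum_n a(n)^2<\infty$ from Assumption \ref{a2}. The Kushner--Clark/Borkar projected-ODE theorem (\cite[Chapter 5]{borkar2008stochastic}) then yields that $\{w_n\}$ tracks the perturbed ODE $\dot{w}(t)=\hat{\Gamma}_2(-\nabla_w R(\theta,w(t))-b_\delta(w(t)))$ and converges a.s. to its internally chain transitive invariant sets.

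Finally I would run the Lyapunov-plus-limiting-$\delta$ argument. Because $R(\theta,\cdot)$ is a Lyapunov function for the unperturbed projected ODE \eqref{appendix:ode:d1} (strictly decreasing along trajectories within $C$ and stationary only on $K_\theta$), its asymptotically stable equilibria are precisely the local minima $K_\theta$. A continuity-of-solutions argument (Hirsch's lemma, \cite{borkar2008stochastic}) shows that as the vector field is perturbed by an $O(\delta)$ term, the invariant sets of the perturbed ODE approach those of \eqref{appendix:ode:d1} as $\delta\downarrow0$; hence there is a $\delta_0>0$ such that for all $\delta\in(0,\delta_0]$ the limit set lies inside $K_\theta^\epsilon$, which is the claim. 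I expect the main obstacle to be the interaction of the $\tfrac1\delta$ scaling with the Markov noise: one must fix $\delta$ first and carry the entire averaging and convergence analysis through with $\delta$ treated as a constant (so that $\tfrac1\delta$ is harmless), and only afterwards exploit the $O(\delta)$ bias uniformly in $w$ to squeeze the limit set into the $\epsilon$-neighbourhood. The order of the two limits, $n\to\infty$ before $\delta\downarrow0$, is essential and is exactly what the choice of $\delta_0$ encodes.
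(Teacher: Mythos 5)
Your proposal is correct and takes essentially the same route as the paper: the paper gives no independent argument for this theorem but defers it entirely to \cite[Theorem 2, Section 3]{bhatnagar2012twotimescale}, and the proof there is precisely what you reconstruct --- the one-simulation SPSA decomposition with Hadamard perturbations averaging the update direction to $\nabla_w R(\theta,\cdot)+O(\delta)$ over a perturbation cycle, the projected-ODE (Kushner--Clark/Borkar) machinery for the fixed-$\delta$ convergence, and a Hirsch-lemma continuity argument to choose $\delta_0$ so the limit set is trapped in $K_\theta^\epsilon$. Your remark about the order of limits ($n\to\infty$ before $\delta\downarrow 0$) is exactly the point the cited analysis hinges on.
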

\begin{proof}
    See \cite[Theorem 2, Section 3]{bhatnagar2012twotimescale}.
\end{proof}

\begin{proposition}
\label{lemma-wcont}
The set $K_\theta^{\epsilon}$ is a compact subset of ${\cal R}^N$ for any $\theta$ and $\epsilon>0$.
\end{proposition}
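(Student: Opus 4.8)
The plan is to establish compactness of $K_\theta^\epsilon$ through the Heine--Borel theorem, i.e.\ by showing that $K_\theta^\epsilon$ is both bounded and closed as a subset of $\mathbb{R}^N$. Boundedness is immediate: by construction $K_\theta^\epsilon \subseteq C$, and $C$ is a compact (hence bounded) subset of $\mathbb{R}^N$, so every element of $K_\theta^\epsilon$ has norm bounded by $\sup_{w \in C}\|w\| < \infty$. The substance of the argument therefore lies entirely in proving that $K_\theta^\epsilon$ is closed.

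My first step toward closedness is to show that $K_\theta$ itself is compact. Since $K_\theta \subseteq C$ it is bounded, so it suffices to show $K_\theta$ is closed. Here I would use that, by Lemma \ref{lemma1}, the stationary probabilities $f_w(i,a)$ are continuously differentiable in $w$, whence both $R(\theta,\cdot)$ and $\nabla_w R(\theta,\cdot)$ are continuous on $C$ for each fixed $\theta$; since $\hat{\Gamma}_2$ is continuous, the composite map $w \mapsto \hat{\Gamma}_2(-\nabla_w R(\theta,w))$ is continuous, and the equilibria of \eqref{appendix:ode:d1}, being its zero set, form a closed subset of $C$. Identifying $K_\theta$ with the asymptotically stable equilibria (equivalently, the local minima of $R(\theta,\cdot)$ over $C$), I would argue that this set is closed in the present setting; as a closed and bounded subset of $\mathbb{R}^N$, $K_\theta$ is then compact.

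With $K_\theta$ compact, I would finish by invoking the distance function $d(w, K_\theta) \stackrel{\triangle}{=} \inf_{w_0 \in K_\theta}\|w - w_0\|$, which is $1$-Lipschitz and hence continuous on $\mathbb{R}^N$. Because $K_\theta$ is compact, the infimum is attained for every $w$, so the condition ``$\|w - w_0\| \le \epsilon$ for some $w_0 \in K_\theta$'' is exactly ``$d(w, K_\theta) \le \epsilon$''; consequently
\[ K_\theta^\epsilon = C \cap \{\, w \in \mathbb{R}^N \mid d(w, K_\theta) \le \epsilon \,\}. \]
The right-hand side is the intersection of the closed set $C$ with the preimage of the closed half-line $(-\infty, \epsilon]$ under the continuous map $d(\cdot, K_\theta)$, hence closed. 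Being a closed subset of the compact set $C$, $K_\theta^\epsilon$ is compact, which is the desired conclusion.

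The main obstacle I anticipate is the closedness of $K_\theta$: the set of local minima of a continuously differentiable function need not be closed in general, since local minimizers can accumulate at a stationary point that is not itself a local minimum. I would handle this either by exploiting additional structure of $R(\theta,\cdot)$ under the Boltzmann parameterization \eqref{eq:pi}, or, more robustly, by observing that $K_\theta$ enters the convergence analysis only through the neighborhood $K_\theta^\epsilon$ and through the distance $d(\cdot, K_\theta) = d(\cdot, \overline{K_\theta})$; replacing $K_\theta$ by its closure $\overline{K_\theta}$ (a compact set) leaves the set $\{w : d(w,K_\theta) \le \epsilon\}$ unchanged and does not affect any subsequent step, so the compactness conclusion holds verbatim with $\overline{K_\theta}$ in place of $K_\theta$.
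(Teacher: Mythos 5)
Your argument is sound and is the natural self-contained route; note that the paper itself offers no argument here at all, deferring entirely to Corollary~2 of \cite{bhatnagar2012twotimescale}, so yours is the only explicit proof on the table. Boundedness via $K_\theta^\epsilon \subseteq C$ is immediate, and your reduction of closedness to the identity $K_\theta^\epsilon = C \cap \{w : \mathrm{dist}(w,K_\theta) \le \epsilon\}$ is the right move --- including the observation that this identity itself requires the infimum defining $\mathrm{dist}(\cdot,K_\theta)$ to be attained, hence requires $K_\theta$ to be closed (this works only with the appendix's ``$\le$'' definition of $K_\theta^\epsilon$; the main body's ``$<$'' version would generally not be closed at all). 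You have also put your finger on the one genuine subtlety: $K_\theta$ is the set of local minima of $R(\theta,\cdot)$ on $C$, and local minimizers of a $C^1$ function can accumulate at a stationary point that is not itself a local minimum, so closedness of $K_\theta$ does not follow from Lemma~\ref{lemma1} and continuity of $\nabla_w R(\theta,\cdot)$ alone --- that argument only yields closedness of the full equilibrium set of \eqref{appendix:ode:d1}, of which $K_\theta$ is a possibly non-closed subset. Your fallback of replacing $K_\theta$ by $\overline{K_\theta}$ is the correct repair and is harmless downstream: Theorem~\ref{appendix:thm:main}, which asserts convergence \emph{to} $K_\theta^\epsilon$, is only formally weakened if the target is enlarged to the compact set $C \cap \{w : \mathrm{dist}(w,K_\theta)\le\epsilon\}$, and the set-valued map $h(\theta)$ in \eqref{appendix:td-di-a} retains the compact-convex-valued and upper-semicontinuous structure needed in Theorem~\ref{appendix:prop1}. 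So, modulo restating the proposition for the closed $\epsilon$-neighborhood $C \cap \{w : \mathrm{dist}(w,K_\theta)\le\epsilon\}$ (or adding a hypothesis ensuring $K_\theta$ is closed), your proof is complete, and it supplies precisely the detail the paper leaves to an external citation.
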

\begin{proof}
Follows in a similar manner as  \cite[Corollary 2]{bhatnagar2012twotimescale}.
\end{proof}

\subsection{Analysis of the $\theta$-recursion}
We now analyze the $\theta$-recursion, which is the slower recursion in \eqref{eq:tqsa-a-update} in the main paper. We first show that the estimate $\hat J_n$ tracks the average cost $J(\pi_{w_n})$ corresponding to the policy parameter $w_n \equiv w(\theta_n)$.

\begin{lemma}
\label{thresholdtuning:lemma-hirsch-faster}
With probability one, $|\hat J_n-J(\pi_{w_n})|\rightarrow 0$ as $n\rightarrow\infty$, where $J(\pi_{w_n})$
is the average reward under $\pi_{w_n}$.
\end{lemma}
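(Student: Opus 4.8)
The plan is to establish Lemma~\ref{thresholdtuning:lemma-hirsch-faster} via a standard two-timescale argument, exploiting the fact that the average-cost estimate $\hat J_n$ is updated with step-size $c(n) = k\,a(n)$, which by Assumption~\ref{a2} lives on the \emph{faster} timescale relative to the $\theta$-recursion (since $b(n)/a(n)\to 0$). Thus, from the viewpoint of the $\hat J$-recursion, the slowly-varying quantities $\theta_n$ and $w_n \equiv w(\theta_n)$ can be treated as quasi-static constants. First I would rewrite \eqref{eq:tqsa-avg} in the standard stochastic approximation form
\begin{equation*}
\hat J_{n+1} = \hat J_n + c(n)\bigl( J(\pi_{w_n}) - \hat J_n + M_{n+1} \bigr),
\end{equation*}
where $M_{n+1} \stackrel{\triangle}{=} g(s_n,a_n) - E[g(s_n,a_n)\mid {\cal G}(n)]$ and I absorb the difference $E[g(s_n,a_n)\mid {\cal G}(n)] - J(\pi_{w_n})$ into a vanishing bias term. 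Here $J(\pi_{w_n})$ is the stationary average of the single-stage cost under the policy $\pi_{w_n}$, which is well-defined and finite by Proposition~\ref{lem1} (ergodicity of the joint chain) together with the uniform boundedness of $g$ noted after \eqref{eq:single-stage-cost}.

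The key steps, in order, are as follows. I would first verify that $\{M_{n+1}\}$ is a square-integrable martingale difference sequence with respect to $\{{\cal G}(n)\}$, which is immediate since $g$ is uniformly bounded, so $\sup_n E[M_{n+1}^2\mid{\cal G}(n)] < \infty$; combined with $\sum_n c^2(n) = k^2\sum_n a^2(n) < \infty$ from Assumption~\ref{a2}, the martingale $\sum_n c(n) M_{n+1}$ converges a.s.\ by the martingale convergence theorem, so the noise is asymptotically negligible. Second, I would argue that the conditional expectation $E[g(s_n,a_n)\mid{\cal G}(n)]$ converges to the stationary average $J(\pi_{w_n})$ along the trajectory: because $\hat J$ moves on a faster timescale than $w_n$ (which is itself driven by the still-slower $\theta_n$ through $w(\theta_n)$), the policy is effectively frozen, and ergodicity (Proposition~\ref{lem1}) guarantees that empirical/conditional averages of the cost converge to the stationary mean. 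This identifies the limiting ODE as
\begin{equation*}
\dot{\hat J}(t) = J(\pi_{w(t)}) - \hat J(t),
\end{equation*}
which is a stable linear ODE with the unique globally attracting equilibrium $\hat J = J(\pi_{w(t)})$ for each frozen $w$. Third, invoking the standard two-timescale convergence result (e.g.\ \cite[Chapter 6]{borkar2008stochastic}), the faster iterate $\hat J_n$ tracks this equilibrium, giving $|\hat J_n - J(\pi_{w_n})| \to 0$ a.s.

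The main obstacle I anticipate is the second step: rigorously justifying that the conditional mean of the single-stage cost converges to the stationary average $J(\pi_{w_n})$ while the policy parameter $w_n$ is simultaneously (slowly) drifting. This requires a careful timescale-separation argument showing that the drift in $w_n$ is $o(1)$ relative to the mixing of the joint Markov chain $\{(X_n,Z_n)\}$, so that the chain equilibrates to the stationary distribution $f_{w_n}$ faster than $w_n$ itself changes. The continuity of the stationary probabilities $f_w(i,a)$ in $w$ established in Lemma~\ref{lemma1}, together with the continuity of $w(\theta)$ and the compactness of the constraint sets $C$ and $D$, ensures that $J(\pi_{w_n})$ varies continuously and boundedly, so the frozen-policy approximation is valid in the limit. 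Everything else reduces to routine verification of the stochastic approximation hypotheses, which the uniform boundedness of $g$ and Assumption~\ref{a2} supply directly.
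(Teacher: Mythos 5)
Your proposal follows essentially the same route as the paper's proof: the same decomposition of the $\hat J$-recursion into the drift $J(\pi_{w_n})-\hat J_n$, a vanishing bias $\xi_n = E[g(s_n,a_n)\mid\mathcal{F}_{n-1}]-J(\pi_{w_n})$ handled via ergodicity of the joint chain on the natural (mixing) timescale, and a square-integrable martingale noise term, followed by the limiting ODE $\dot{\hat J}(t)=J(\pi_{w(t)})-\hat J(t)$ with its unique stable equilibrium. The only substantive step the paper includes that you leave implicit is the stability/boundedness check for the iterates via the scaled vector field $H_\infty(\hat J)=-\hat J$ (the Borkar--Meyn criterion), which is routine here given the uniform boundedness of $g$.
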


\begin{proof}
The $\hat{J}$ update can be re-written as
\begin{align}
\hat{J}_{n + 1} = \hat{J}_n + b(n)\left ( J(\pi_{w_n}) + \xi_n - \hat{J}_n + M_{n + 1} \right ).
\label{eq:hatJ}
\end{align}
In the above,
\begin{itemize}[$\bullet$]
    \item $\mathcal{F}_n = \sigma(w_m, \theta_n,\xi_m,M_m; m \le n), n \ge 0$ is a set of $\sigma$-fields.
    \item $\xi_n = (E[ g(s_n,a_n) | \mathcal{F}_{n - 1} ]  - J(\pi_{w_n})), n \ge 0$ and
    \item $M_{n + 1} = g(s_n,a_n) - E[ g(s_n,a_n) | \mathcal{F}_{n - 1} ], n \ge 0$ is a martingale difference sequence.
\end{itemize}
 Let $N_m = \sum_{n = 0}^{m} c(n) M_{n + 1}$. Clearly, $(N_m, \mathcal{F}_m), m \ge 0$ is a square-integrable and almost surely convergent martingale.
Further, from Proposition \ref{lem1},
$|\xi_n| \rightarrow 0$ almost surely on the `natural timescale', as $n \rightarrow \infty$.  The `natural timescale' is faster than the algorithm's timescale and hence $\xi_n$ vanishes asymptotically, almost surely, see \cite[Chapter 6.2]{borkar2008stochastic} for detailed treatment of natural timescale algorithms.

The ODE associated with \eqref{eq:hatJ} is
\begin{align}
    \dot{\hat J}(t) = J(\pi_{w(t)}) - \hat J(t) \stackrel{\triangle}{=} H(\hat J(t)).
\label{eq:ode-hatJ}
\end{align}
Let $H_{\infty}(\hat J(t)) = \lim_{c\rightarrow\infty} \dfrac{H(c\hat J(t))}{c} = -\hat J(t)$. Note that the ODE
\[ \dot{\hat J}(t) = -\hat J(t)\]
has the origin as its unique globally asymptotically stable equilibrium. Further, the ODE \eqref{eq:ode-hatJ} has ${\hat J}^* = J(\pi_{w_n})$ as its unique asymptotically stable equilibrium. The claim follows from Lemma 7 - Corollary 8 on pp. 74 and Theorem 9 on pp. 75 of \cite{borkar2008stochastic}.
\end{proof}

Let $T_w:{\cal R}^{|S\times A(S)|}\rightarrow {\cal R}^{|S\times A(S)|}$ be the operator given by
\begin{equation}
\label{appendix:eq:T_a}
T_w(J)(i,a) =
g(i,a) - J(\pi_w) e
+ \sum_{j\in S, b\in A(j)}p_{w}(i,a; j,b) J(j,b),
\end{equation}
or in more compact notation
\[ T_w(J) = G - J(\pi_w) e + P_{w} J, \]
where $G$ is the column vector with components $g(i,a), i\in S, a\in A(i)$.
$P_{w}$ is the transition probability matrix of the joint (state-action) Markov chain $\{(X_n, Z_n)\}$ under policy $\pi_w$, with components $p_{w}(i,a; j,b)$. Here $p_{w}(i,a; j,b)$ denote the transition probabilities of the joint process $\{(X_n, Z_n)\}$.

The differential inclusion associated with the $\theta$-recursion of \eqref{eq:tqsa-a-update} in the main paper corresponds to
\begin{equation}
\label{appendix:td-di-a}
\dot{\theta}(t) \in \hat{\Gamma}_\theta\left(h(\theta)\right),
\end{equation}
where $h(\theta)$ is the set-valued map, defined in compact notation as follows:
\[h(\theta) \stackrel{\triangle}{=} \{\Phi^T \gDash (T_{w(\theta)} (\Phi \theta)-\Phi \theta\mid w \in K^\epsilon_\theta\}.\]
In the above, $\F$ denotes
the diagonal matrix with elements along the diagonal being $f_{w}(i,a)$, $i\in S$, $a\in A(i)$.
Also, $\Phi$ denotes the matrix with rows $\sigma_{s,a}^T,~ s \in S, a \in A(s)$. The number of
rows of this matrix is thus $|S \times A(S)|$, while the number of columns is $N$.
Thus,
$\Phi = (\Phi(i), i=1,\ldots, N)$ where $\Phi(i)$ is the column vector
\[\Phi(i) = (\sigma_{s,a}(i), s \in S, a \in A(s))^T, ~ i=1,\ldots,N.\]
Further, the projection operator $\hat\Gamma_\theta$ is defined as
\[\hat\Gamma_\theta \stackrel{\triangle}{=} \cap_{\epsilon>0} ch\left( \cup_{\{\parallel \beta-\theta \parallel < \epsilon\}} \{ \gamma_1(\beta;y+Y) \mid y \in h(\beta), Y \in R(\beta) \} \right),\text{ where}\]

\begin{itemize}[$\bullet$]
\item $ch(S)$ denotes the closed convex hull of the set $S$;
\item $\gamma_1(\theta;y)$ denotes the directional derivative of $\Gamma_1$ at $\theta$ in the direction $y$ and is defined by
\begin{align}
\gamma_1(\theta;y) \stackrel{\triangle}{=} \lim_{\eta\downarrow0} \left(\dfrac{\Gamma_1(\theta_n + \eta y)-\theta}{\eta} \right); 
\label{eq:gamma1}
\end{align}
 \item $Y(n+1)$ is defined as follows:
\begin{align*}
Y(n+1) &\stackrel{\triangle}{=} 
\left(g(X_n,Z_n) - J(\pi_{w_n}) +
\theta_n^T\sigma_{X_{n+1},Z_{n+1}} - \theta_n^T\sigma_{X_n,Z_n}\right)\sigma_{X_n,Z_n}\\
 - &E\left[
\left(g(X_n,Z_n) - J(\pi_{w_n}) +
\theta_n^T\sigma_{X_{n+1},Z_{n+1}} - \theta_n^T\sigma_{X_n,Z_n}\right)\sigma_{X_n,Z_n} \mid {\cal G}(n)
\right],
\end{align*}
where ${\cal G}(n)$ $= \sigma (\theta_r, X_r, Z_r, r\leq n),n\geq 0$ is a sequence
of associated sigma fields; and
\item $R(\beta)$ denotes the compact support of the conditional distribution of $Y(n+1)$ given ${\cal G}(n)$.
\end{itemize}

The main result is then given as follows:
\begin{theorem}
\label{appendix:prop1}
The iterate $\theta_n$, $n\geq 0$ governed by (\ref{eq:tqsa-a-update}) in the main paper, converges a.s to a closed connected internally chain transitive invariant set of \eqref{appendix:td-di-a}.
\end{theorem}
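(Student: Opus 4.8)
The plan is to analyze the $\theta$-recursion \eqref{eq:tqsa-a-update} as a single-timescale stochastic recursive inclusion, exploiting the fact that on this slower timescale the faster $w$-recursion has equilibrated to the set $K_\theta^\epsilon$ (Theorem \ref{appendix:thm:main}) and the average-cost estimate $\hat J_n$ has tracked $J(\pi_{w_n})$ (Lemma \ref{thresholdtuning:lemma-hirsch-faster}). Concretely, I would invoke the two-timescale principle from \cite[Chapter 6]{borkar2008stochastic}: since $b(n)/a(n)\to 0$, from the viewpoint of the $\theta$-iteration the policy parameter $w_n$ appears quasi-static and converged, so that the effective driving term becomes the set-valued map $h(\theta)$ built from $T_{w(\theta)}$ ranging over $w \in K_\theta^\epsilon$. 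The goal is then to cast \eqref{eq:tqsa-a-update} into the standard form of a projected stochastic recursive inclusion and apply the convergence theorem for such schemes (e.g. \cite[Chapter 5]{borkar2008stochastic}), concluding that $\theta_n$ converges to a closed connected internally chain transitive invariant set of the differential inclusion \eqref{appendix:td-di-a}.

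\textbf{The key steps, in order.} First, I would rewrite the increment in \eqref{eq:tqsa-a-update} as a mean field plus a martingale difference: replacing $\hat J_{n+1}$ by $J(\pi_{w_n})$ (justified by Lemma \ref{thresholdtuning:lemma-hirsch-faster}) and taking conditional expectations with respect to $\mathcal{G}(n)$, the conditional mean of $\sigma_{s_n,a_n}(g(s_n,a_n)-J(\pi_{w_n})+\theta_n^T\sigma_{s_{n+1},a_{n+1}}-\theta_n^T\sigma_{s_n,a_n})$ evaluates, under the stationary distribution $f_{w_n}$ of the ergodic joint chain (Proposition \ref{lem1}), to $\Phi^T \F (T_{w_n}(\Phi\theta_n)-\Phi\theta_n)$; the remaining fluctuation is exactly $Y(n+1)$, a martingale difference with compact conditional support $R(\beta)$. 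Second, since $w_n$ need not converge to a single point but only to the set $K_\theta^\epsilon$, the mean field is genuinely set-valued, giving $h(\theta)=\{\Phi^T\F(T_{w(\theta)}(\Phi\theta)-\Phi\theta)\mid w\in K_\theta^\epsilon\}$; I would verify that $h$ is a Marchaud map (nonempty, compact, convex values, upper semicontinuous, linear growth) using the boundedness of $g$, the continuity of $f_w$ in $w$ (Lemma \ref{lemma1}), and the compactness of $K_\theta^\epsilon$ (Proposition \ref{lemma-wcont}). Third, I would incorporate the projection $\Gamma_1$ via its directional derivative $\gamma_1$ \eqref{eq:gamma1}, forming the projected map $\hat\Gamma_\theta(h(\theta))$ as defined in the excerpt, and check that the projected inclusion \eqref{appendix:td-di-a} remains Marchaud. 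Fourth, I would establish the martingale-noise and stability prerequisites: square-summability of $a(n)$ ensures the noise terms $\sum a(n) Y(n+1)$ converge a.s., while boundedness of $\{\theta_n\}$ comes for free from the projection onto the compact set $D$. Finally, I would apply the stochastic-recursive-inclusion convergence theorem to conclude a.s.\ convergence of $\theta_n$ to a closed connected internally chain transitive invariant set of \eqref{appendix:td-di-a}.

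\textbf{The main obstacle} will be rigorously justifying the passage to the set-valued mean field — that is, proving that because the faster $w$-recursion converges only to the neighborhood $K_\theta^\epsilon$ rather than to an isolated equilibrium, the averaged drift seen by the $\theta$-iteration is the \emph{set} $h(\theta)$ and that the slow iterate tracks the resulting differential inclusion. This requires care: one must argue that along the slower timescale $w_n$ is effectively frozen in $K_{\theta_n}^\epsilon$ and that the conditional averaging over the (ergodic) joint chain can be interchanged with this set-valued limit, so that the interpolated trajectory of $\theta_n$ is an asymptotic pseudotrajectory of \eqref{appendix:td-di-a}. A secondary technical point is handling the projection operator correctly, since $\hat\Gamma_\theta$ must be defined through the intersection over shrinking neighborhoods of the closed convex hull involving both the directional derivative $\gamma_1$ and the noise support $R(\beta)$, in order to keep the projected inclusion well-posed at the boundary of $D$. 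I would expect both of these to follow the template of \cite{bhatnagar2012twotimescale} adapted to the average-cost operator $T_w$, with the ergodicity results (Proposition \ref{lem1}, Lemmas \ref{lemma1}, \ref{thresholdtuning:lemma-hirsch-faster}) supplying the averaging that makes the mean field well-defined.
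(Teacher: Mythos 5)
Your proposal follows essentially the same route as the paper's proof: the same martingale decomposition with noise term $Y(n+1)$, the same set-valued mean field $h(\theta)$ arising because $w_n$ only converges to $K_\theta^\epsilon$, the same treatment of $\Gamma_1$ via the directional derivative $\gamma_1$, verification of the Marchaud-map conditions (which the paper delegates to \cite[Proposition 3]{bhatnagar2012twotimescale}), and the final appeal to the projected stochastic recursive inclusion convergence theorem of \cite[Chapter 5]{borkar2008stochastic}. The only slip is that the square-summability relevant for the $\theta$-recursion's noise is that of $b(n)$, not $a(n)$; otherwise the plan matches the paper.
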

\begin{proof}
Let $N(n) = \sum_{m = 0}^{n-1} b(m) Y(m + 1)$. It is easy to see that $N(n), n\ge0$ is a martingale sequence. Further, $(N(n), \mathcal{G}(n)), n \ge 0$ is a square-integrable and almost surely convergent martingale, owing to the following facts:\\ 
\begin{inparaenum}[\bfseries(i)]  
 \item $\sup\limits_{(i,a)\in S\times A(S)} \parallel \sigma_{i,a} \parallel < \infty$ and $\sup\limits_{(i,a)\in S\times A(S)} |g(i,a)| < \infty$ since $S\times A(S)$ is a finite set.\\
 \item Since we project the iterate $\theta$ using $\Gamma_1$ onto a compact and convex set $C$, we have $\sup\limits_n \parallel \theta_n \parallel < \infty$.\\
 \item By assumption, the step-size sequence $b(n),n\ge0$ satisfies $\sum\limits_n b(n)^2 < \infty$. \\
\end{inparaenum}
Thus, the $\theta$-recursion \eqref{eq:tqsa-a-update} in the main paper can be re-written as follows:
\begin{align}
\theta_{n+1} = & \Gamma_1\bigg(\theta_n + b(n) y_n + b(n)Y(n+1)\bigg),
\end{align}
Following the technique in \cite[Chapter 5.4]{borkar2008stochastic}, one can rewrite the above as follows:
 \begin{align}
\theta_{n+1} = & \theta_n + b(n)\left(\dfrac{\Gamma_1(\theta_n + b(n)(y_n + Y(n+1)))-\theta_n}{b(n)}\right),\nonumber\\
             = & \theta_n + b(n)\left(\gamma_1(\theta_n;y_n + Y(n+1)) + o(b(n))\right),\label{eq:t2}
\end{align}
where $\gamma_1(\theta;y)$ is as defined in \eqref{eq:gamma1}.
Let $z_n \stackrel{\triangle}{=} E[\gamma_1(\theta_n;y_n + Y(n+1))\mid {\cal G}(n)]$ and $\check Y(n+1) \stackrel{\triangle}{=} \gamma_1(\theta_n;y_n + Y(n+1)) - z_n$. Then, it is easy to see that \eqref{eq:t2} is equivalent to
 \begin{align}
\theta_{n+1} = & \theta_n + b(n)\left( z_n + \check Y(n+1) + o(b(n))\right).\label{eq:t3}
\end{align}

Using similar arguments as in  \cite[Proposition 3]{bhatnagar2012twotimescale}, it can be seen that $\hat\Gamma(h(\theta))$ satisfies the conditions stipulated in \cite[Section 5.1]{borkar2008stochastic}. These conditions ensure that  $\hat\Gamma(h(\theta))$ is compact, convex valued and upper-semicontinuous with bounded range on compacts.

Since we use $\Gamma_1$ operator to ensure $\theta$ is bounded, the general result of \cite[Corollary 4, Chapter 5]{borkar2008stochastic} can be applied to see that $\theta_n$ converges a.s. to a closed connected internally chain transitive invariant set of \eqref{appendix:td-di-a}.  
 The claim follows.
\end{proof}

\bibliography{rl_sleepwake}
\bibliographystyle{plainnat}

\end{document}